\newtheorem{theorem}{Theorem}
\newtheorem{lemma}[theorem]{Lemma}
\theoremstyle{remark}
\newtheorem{remark}[theorem]{Remark}
\newtheorem{example}[theorem]{Example}
\journal{Journal of \LaTeX\ Templates}
\newcommand{\card}[1]{\ell(#1)}
\newcommand{\cardd}[1]{\ell'(#1)}
\algnewcommand{\IIf}[1]{\State\algorithmicif\ #1\ \algorithmicthen}
\algnewcommand{\EndIIf}{\unskip\ \algorithmicend\ \algorithmicif}
\newcommand{\comment}[1]{}
\newcommand{\ebox}{\hspace*{\fill}$\Box$}
\newcommand{\lsem}{\,[\kern-.15em[\,}
\newcommand{\rsem}{\,]\kern-.15em]\,}
\begin{document}

\begin{frontmatter}

\title{The Sum Composition Problem}

\author[unimiaddress]{Mario Pennacchioni}
\ead{mario.pennacchioni@usa.net}

\author[poliaddress]{Emanuele Munarini}
\ead{emanuele.munarini@polimi.it}

\author[unimiaddress]{Marco Mesiti\corref{mycorrespondingauthor}}
\ead{marco.mesiti@unimi.it}

%

\address[unimiaddress]{Dept. of Computer Science, University of Milano, Via Celoria, 18 -- Milano}
\address[poliaddress]{Dept. of Mathematics,  Politecnico of Milano, Piazza Leonardo da Vinci, 32 -- Milano}
\cortext[mycorrespondingauthor]{Corresponding author}

\begin{abstract}
 In this paper, we study the ``\emph{sum composition problem}''  between two lists $A$ and $B$ of positive integers.  We start by saying that $B$ is \emph{sum composition} of $A$  when there exists an ordered $m$-partition $[A_1,\ldots,A_m]$ of $A$  where $m$ is the length of $B$ and 
 the sum of each part $A_k$  is equal to the corresponding part of $B$.  
 Then, we consider the following two problems:
 $i)$ the \emph{exhaustive problem}, consisting in the generation of all partitions of $A$
     for which $B$ is sum composition of $A$, and
 $ii)$ the \emph{existential problem}, consisting in the verification of the existence of a partition of $A$
     for which $B$ is sum composition of $A$.
 Starting from some general properties of the sum compositions,
 we present a first algorithm solving the exhaustive problem
 and then a second algorithm solving the existential problem.
 We also provide proofs of correctness and experimental analysis
 for assessing the quality of the proposed solutions
 along with a comparison with related works.
\end{abstract}

\begin{keyword}
number partition problem \sep integer partitions \sep algorithms
\end{keyword}

\end{frontmatter}


\section{Introduction}

The ``{\em sum composition problem}'' between two lists
$A=[a_1, \ldots, a_n]$ and $B=[b_1, \ldots,b_m]$ of positive integers 
consists in the identification of a decomposition of the list $A$ into $m$ sub-lists $A_1,\ldots,A_m$  (where $m$ is the length of $B$) such that the sum of $A_k$  is equal to  $b_k \in B$ ($1 \leq k \leq m$). When this decomposition exists we say that $B$ is sum composition of $A$.

The sum composition problem has some analogies and shares the same complexity with the NP-hard {\em number partition problem} \cite{GJ79} where a list of integer $A$ is partitioned in $k$ partitions $P_1 \ldots P_k$ such that the sum of each part $P_j$ ($1 \leq j \leq k$) are as nearly equal as possible. However, there are several practical applications that can be modeled as  ``{\em sum composition problem}''. 
In the case of weighted graphs, the values in $A$ can correspond to the weights of the edges, and the values in $B$ are the aggregated values according to which the graph should be partitioned. In \cite{Chaimovich1993AFP}, the sum composition problem is shown as the problem of scheduling  independent tasks on uniform machines with different capabilities. In the case of databases, this problem has been encountered  by one of the authors when studying functional dependencies in a relational table. Specifically, when a functional dependency between an attribute $x$ and an attribute $y$ of a table $R$ exists, we can consider the list of the occurrences of values assumed by $x$ (denoted $A$) and the corresponding one assumed by $y$ (denoted by $B$). When $B$ is sum composition of $A$ it means that a functional dependency $x \rightarrow y$ can exist (this is a necessary condition but not sufficient).
The properties of sum composition can be exploited
for simplifying the checking of the existence of a functional dependency
by using the statistics associated with the table with no access to the single tuples.

In this paper, we start by interpreting this problem in terms of
a partial order relation between two integer partitions \cite{Ziegler}
and, then, we provide some general properties of such a relation.
There are two kinds of properties.
Those that guarantee the existence of a sum composition between two lists
and those that allow a simplification of the two lists
without affecting the property of being one sum composition of the other.
These properties are exploited in the realization of two algorithms.
The first one, named $\textsc{SumComp}$,
has the purpose to generate all possible decompositions of $A$
for which $B$ is sum composition of $A$.
The second one, named $\textsc{SumCompExist}$,
has the purpose to check the existence of at least one decomposition of $A$
for which the relation holds with $B$. 
In this last case, we are not interested in determining a decomposition but only to verify its existence.
The complexity of these algorithms is the same of the algorithms developed for the number partition problem because of the combinatorial explosion of the cases that need to be checked. However, the identified properties and the particular representation of the lists (in which distinct integers are reported along with their multiplicity) allows us to mitigate in some cases the  growth of the execution time curve.  


As  discussed in the related work section, 
the sum composition problem has received little attention from the research community.
However, its formulation can be of interest in many practical applications. 
Similar problems, presented in the literature for identifying optimal multi-way number partition \cite{SEK18}, the k-partitioning problem \cite{Chaimovich1993AFP}, and the sub-set sum problem \cite{Karp1972}, usually are devoted to identify a single correspondence between the two lists and sometimes they introduce restrictions on the values occurring in $A$ and $B$. Key characteristics of the proposed approach is the use of different properties of sum composition for reducing the controls in the algorithms and the adoption of a data structure for representing the partitions that reduce the number of combination of values to be considered. As shown in the experimental analysis section, these design strategies have positive effects on the running times of the algorithms (especially for the existential one) even if increasing the size of the partitions the execution times move quickly to an exponential rate.

The paper is structured as follows. Section~\ref{sec:pb} introduces the problem  and some notations. Then, Section~\ref{sec:properties} deals with the properties of sum composition. Section~\ref{sec:SCA} introduces the $\textsc{SumComp}$ algorithm and proves its  correctness.  Section~\ref{sec:existence} revises the previous algorithm for checking the  sum composition existence.  Section~\ref{sec:exp} presents the experimental results and Section~\ref{sec:rw} compares our results with related works. Finally, Section~\ref{sec:conclusion} draws the conclusions and  future research directions.

\section{Problem Definition}\label{sec:pb}

An \emph{integer partition} (or, simply, a \emph{partition}) of a 
positive 
integer $N$
is a list $ A = [a_1,a_2,\ldots,a_n] $
where each \emph{part} $ a_i $ is a positive integer,
$ a_1 \leq a_2 \leq \cdots \leq a_n $
and $ \sigma(A) = a_1 + a_2 + \cdots + a_n = N $.
A partition of $N$ can also be represented as
$ A = [ (a'_1,m_1), (a'_2,m_2), \ldots, (a'_h,m_h)]$,
where the numbers $ a'_1 < \cdots < a'_h $ are the distinct parts of $A$,
and $ m_1, m_2, \ldots, m_h $ are the respective multiplicities.
We write $\card{A}$ for the length of the list $A$,
i.e. for the number of parts of the partition $A$,
and $\cardd{A}$ for the number of distinct parts in $A$.
Usually, in the literature, an integer partition is written in decreasing order.
In this paper, however, we write an integer partition in increasing order because,
as we will see later, this notation is exploited in the formulation of the  algorithms.

\begin{example}
 The list $ A \!=\! [1,2,2,4,4,6] $ is an integer partition
 of $ N \!=\! \sigma(A) \!=\! 19 $ with $\card{A}\!=\!6$ parts and $\cardd{A}\!=\!4$ distinct parts.
 This partition can also be represented as $[(1,1),(2,2),(4,2),(6,1)]$. \ebox
\end{example}

A \emph{de\-com\-pos\-ition} of an integer partition $ A = [a_1,a_2,\ldots,a_n] $
is a list of integer partitions $ [A_1,A_2,\ldots,A_m] $ obtained as follows:
given a set partition $ \{I_1,I_2,\ldots,I_m\} $ of the index set $\{1,2,\ldots,n\} $ of $A$, $A_i$ is the integer partition whose parts are the $a_j$ with $j\in I_i$, for $ i = 1,2,\ldots,m$.

\begin{example}
 The partition $ A = [1,2,2,4,4,6] $ admits, for instance,
 the de\-com\-pos\-ition $ [A_1,A_2,A_3] $, where $A_1=[1,6] $, $A_2=[4] $ and $A_3=[2,2,4] $.\ebox
\end{example}

The set $ P_n $ of all integer partitions of $n$
can be ordered by refinement \cite[pp.\ 16, 1041]{Birkhoff} \cite{Ziegler} as follows:
given two integer partitions $ A = [a_1,a_2,\ldots,a_n] $ and $ B = [b_1,b_2,\ldots,b_m] $,
we have $ A \leq B $ whenever there exists a de\-com\-pos\-ition $ [A_1,A_2,\ldots,A_m] $ of $A$
such that $ \sigma(A_i) = b_i $, for every $ i = 1,2, \ldots, m $.
The set $ P_n $ is a very well known poset (partially ordered set)
whose topological properties have been studied in several papers
(e.g. see \cite{Ziegler} and the bibliography therein).

We will say that $B$ is \emph{sum composition} of $A$ whenever $A\leq B$.
Moreover, for simplicity, we will call \emph{$AB$-de\-com\-pos\-ition} or \emph{enabling de\-com\-pos\-ition}
any de\-com\-pos\-ition of $A$ for which $ A\leq B $.

\begin{example}\label{ex:es2}
 For the two integer partitions $ A = [1, 2, 2, 3, 4, 5] $ and $ B = [5, 5, 7] $,
 we have the following $AB$-de\-com\-pos\-itions:
 $ D_1 = [ [2, 3], [1, 4], [2, 5] ] $, $ D_2 = [[1, 4], [2, 3], [2, 5] ] $,
 $ D_3 = [[1, 2, 2],$ $[5], [3, 4] ] $, $ D_4 = [ [2, 3], [5], [1, 2, 4] ] $,
 $ D_5 = [[1, 4], [5], [2, 2, 3] ] $, $ D_6 = [[5], [1, 2, 2], [3, 4]] $,
 $ D_7 = [[5], [2, 3], [1, 2, 4] ] $ and $ D_8 = [[5], [1, 4], [2, 2, 3]] $. \ebox
\end{example}

In this way, the \emph{sum composition problem} can be formulated as follows:
\begin{itemize}[nosep]
 \item[(a)] generate all the $AB$-de\-com\-pos\-itions (exhaustive algorithm) between   the partitions  $A$ and $B$;
 \item[(b)] determine if $ A \leq B $ (existential algorithm).
\end{itemize}
The sum composition problem has also a simple puzzle interpretation \cite{JoniRota,Ziegler}.
Indeed, we have $ A \leq B $ whenever there is a tiling of the Ferrers diagram $\Phi_B$ of $B$
using the rectangles corresponding to the parts of $A$
(as in Figure~\ref{FigTiling}).
\begin{figure}[t]
 \centering
  \setlength{\unitlength}{5mm}
  \begin{picture}(16,5)(0,-1) 
   \put(0,0){
   \begin{picture}(7,4)
    \put(0,0){\line(1,0){2}}
    \put(0,1){\line(1,0){2}}
    \put(0,2){\line(1,0){4}}
    \put(0,3){\line(1,0){7}}
    \put(0,4){\line(1,0){7}}
    \put(0,0){\line(0,1){4}}
    \put(1,0){\line(0,1){4}}
    \put(2,0){\line(0,1){4}}
    \put(3,2){\line(0,1){2}}
    \put(4,2){\line(0,1){2}}
    \put(5,3){\line(0,1){1}}
    \put(6,3){\line(0,1){1}}
    \put(7,3){\line(0,1){1}}
    \put(3.5,-1){$(a)$}
   \end{picture}}
   \put(10,0){ 
   \begin{picture}(7,4)
   \newsavebox{\rectangle}
   \savebox{\rectangle}(0.8,0.8)[bl]
    {\put(0.1,0.1){\line(1,0){0.8}}
     \put(0.1,0.9){\line(1,0){0.8}}
     \put(0.1,0.1){\line(0,1){0.8}}
     \put(0.9,0.1){\line(0,1){0.8}}
    }
    \newsavebox{\rectangleb}
    \savebox{\rectangleb}(0.8,1.8)[bl]
    {\put(0.1,0.1){\line(1,0){1.8}}
     \put(0.1,0.9){\line(1,0){1.8}}
     \put(0.1,0.1){\line(0,1){0.8}}
     \put(1.9,0.1){\line(0,1){0.8}}
    }
    \newsavebox{\rectanglec}
    \savebox{\rectanglec}(0.8,2.8)[bl]
    {\put(0.1,0.1){\line(1,0){2.8}}
     \put(0.1,0.9){\line(1,0){2.8}}
     \put(0.1,0.1){\line(0,1){0.8}}
     \put(2.9,0.1){\line(0,1){0.8}}
    }
    \put(0,0){\line(1,0){2}}
    \put(0,1){\line(1,0){2}}
    \put(0,2){\line(1,0){4}}
    \put(0,3){\line(1,0){7}}
    \put(0,4){\line(1,0){7}}
    \put(0,0){\line(0,1){4}}
    \put(1,0){\line(0,1){4}}
    \put(2,0){\line(0,1){4}}
    \put(3,2){\line(0,1){2}}
    \put(4,2){\line(0,1){2}}
    \put(5,3){\line(0,1){1}}
    \put(6,3){\line(0,1){1}}
    \put(7,3){\line(0,1){1}}
    \put(3.5,-1){$(b)$}
    \put(0,0){\usebox{\rectangle}}
    \put(1,0){\usebox{\rectangle}}
    \put(0,1){\usebox{\rectangleb}}
    \put(0,2){\usebox{\rectangleb}}
    \put(2,2){\usebox{\rectangleb}}
    \put(0,3){\usebox{\rectanglec}}
    \put(3,3){\usebox{\rectanglec}}
    \put(6,3){\usebox{\rectangle}}
   \end{picture}}
  \end{picture}
 \caption{(a) Ferrers diagram $\Phi_B$ of the integer partition $ B = [2,2,4,7] $.
  (b) Tiling of $\Phi_B$ corresponding to the de\-com\-pos\-ition $ [ [1,1], [2], [2,2], [1,3,3] ] $
  of the integer partition $ A = [1,1,1,2,2,2,3,3] $.}
 \label{FigTiling}
\end{figure}
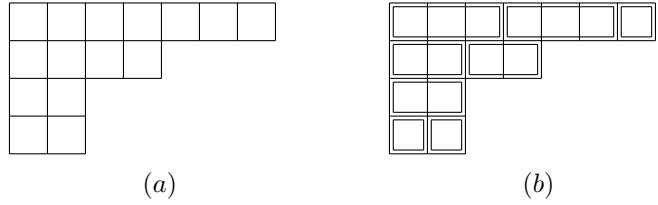

\begin{remark}
 Given an integer partition $ B = [b_1,b_2,\ldots,b_m] $,
 the number of all $AB$-de\-com\-pos\-itions with $ A \leq B $
 is $ p(b_1)p(b_2)\cdots p(b_m) $,
 where $ p(k) $ is the number of the integer partitions of $ k $ \cite[p. 307]{Comtet}.
 Similarly, given an integer partition $ A = [a_1,\ldots,a_n] $,
 the number of all $AB$-de\-com\-pos\-itions with $ A \leq B $ is at most $ b_n $,
 where $ b_n $ is the Bell number of order $n$ \cite[p. 210]{Comtet},
 i.e. of the number of set partitions of an $n$-set.
 Finally, given two integer partitions $ A = [a_1,\ldots,a_n] $
 and $ B = [b_1,b_2,\ldots,b_{m-1},b_m] $ with $ A \leq B $,
 the number $ d(A,B ) $ of $AB$-de\-com\-pos\-itions satisfy the bounds
 $ 1 \leq d(A,B) \leq p(b_1) p(b_2)\cdots p(b_{m-1}) $.
 Clearly, if $ A = [1,\ldots,1] $ has length $m$ and $ B = [m] $,
 we have $ A \leq B $ and $ d(A,B) = 1 $.
 Moreover, for the partitions $ A = [1,1,1,1,1,2,2,3] $ and $ B = [2,3,7] $,
 we have $ A \leq B $ and $ d(A,B) = 6 = p(2)p(3) $.
\end{remark}

In the rest of the paper, we will use the following operations on integer partitions.
The union of two partitions $A$ and $B$ is the partition $ A\cup B$
whose parts are those of $A$ and $B$, arranged in increasing order.
Similarly, the intersection of $A$ and $B$ is the partition $ A\cap B$
whose parts are those common to $A$ and $B$, arranged in increasing order.
$A$ is a subpartition of $ B $, i.e. $ A \subseteq B $,
when every part of $A$ is also a part of $B$.
If $A$ is a subpartition of $B$, then we write $ B\setminus A$ for the partition
whose parts are those of $B$ taken off those of $A$.
Finally, we write $ a \in A $ when $a$ is a part of $ A $.
In this way, we have $ \sigma(A\cup B) = \sigma(A) + \sigma(B) $ and
$ \sigma(B\setminus A) = \sigma(B) - \sigma(A) $.

\begin{example}
 Given  $ A = [1,1,3,4] $ and $ B = [1,2,4,4,5] $,
 we have $ A \cup B = [1,1,1,2,3,4,4,4,5] $ and $ A\! \cap\! B \!=\! [1,4] $.
 Given  $ A \!=\! [1,1,3,4] $ and $B\! = \![1,1,2,3,3,4,5] $,
 we have $ A\! \subseteq \!B $ and $ B\! \setminus\! A \!=\! [2,3,5]$. \ebox
\end{example}

\section{Properties of  Sum Composition}\label{sec:properties}

In this section, we present some properties of the sum composition relation
that will be exploited   in the exhaustive and existential algorithms  presented in the next sections.

\subsection{Basic Properties}

We start by listing 
some basic necessary conditions for the existence of a sum composition.

\begin{lemma}\label{lemma:1}
 Let $A=[a_1,\ldots,a_n]$ and $B=[b_1,\ldots,b_m]$ be two integer partitions such that $A \leq B$,
 and let $ [A_1,\ldots,A_m] $ be an $AB$-de\-com\-pos\-ition.
 Then, we have
 \begin{enumerate}[nosep]
  \item $n \geq m$
  \item $a_1 \leq b_1$ (and, more generally, $a_1 \leq b_i$ for every $i$)
  \item $a_n \leq b_m$ (and, more generally, $a_i \leq b_m$ for every $i$)
  \item if $ a_i \in A_j $, then $ a_i \leq b_j $
  \item $ \sigma(A) = \sigma(B) $.
 \end{enumerate}
\end{lemma}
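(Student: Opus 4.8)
The plan is to verify the five items essentially directly from the definitions of decomposition and of the order relation $\le$, since each is a straightforward consequence.

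First I would recall the setup: since $A\le B$, there is an $AB$-decomposition $[A_1,\dots,A_m]$ coming from a set partition $\{I_1,\dots,I_m\}$ of $\{1,\dots,n\}$, with $\sigma(A_j)=b_j$ for each $j$. For item (5), $\sigma(A)=\sum_{j=1}^m\sigma(A_j)=\sum_{j=1}^m b_j=\sigma(B)$, because the $I_j$ partition the whole index set so every part $a_i$ is counted exactly once. For item (1), each $A_j$ is a genuine integer partition, hence nonempty (it has at least one part), so the sets $I_1,\dots,I_m$ are nonempty and pairwise disjoint subsets of $\{1,\dots,n\}$, forcing $n\ge m$. For item (4), if $a_i\in A_j$ then $a_i$ is one summand in the sum $\sigma(A_j)=b_j$, and since all parts are positive integers the single part $a_i$ cannot exceed the total, i.e. $a_i\le\sigma(A_j)=b_j$.

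Items (2) and (3) then follow from (4) together with the ordering conventions on $A$ and $B$. For (3): take any $i$, say $a_i\in A_j$; by (4) we have $a_i\le b_j$, and since $B$ is written in increasing order, $b_j\le b_m$, so $a_i\le b_m$ for every $i$; in particular $a_n\le b_m$. For (2) the argument is the symmetric one using that $A$ is increasing: fix any index $i$; then $a_1\le a_i$, and with $a_i\in A_j$ we get $a_1\le a_i\le b_j$; since this holds for the index realizing the smallest $b_j$ as well, and in fact $b_1\le b_j$ for all $j$, we conclude $a_1\le b_1\le b_i$ for every $i$. (More carefully for the ``$a_1\le b_1$'' part: whichever block $A_j$ contains the index $1$, we have $a_1\le\sigma(A_j)=b_j$; but we want $a_1\le b_1$ specifically, which follows because $a_1\le b_j$ for the particular $j$ and $b_1\le b_j$ does not immediately give it — instead note that $b_1=\sigma(A_{j_0})$ for some block $A_{j_0}$, and $a_1\le$ every part of $A_{j_0}$, and $A_{j_0}$ is nonempty, so $a_1\le\sigma(A_{j_0})=b_1$.)

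I do not expect any real obstacle here: all five statements are immediate from unfolding the definition of decomposition, the fact that each block is a nonempty partition of positive integers, and the increasing-order conventions. The only mild subtlety is being careful, in items (2) and (3), about which block a given index lands in and using the increasing order of $A$ (resp. $B$) to pass from a bound involving $b_j$ to one involving $b_1$ (resp. $b_m$); I would state the per-block inequality $a_i\le b_j$ from (4) first and then quote monotonicity of the lists.
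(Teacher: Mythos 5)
Your proof is correct and follows essentially the same route as the paper's: unfold the definition of an $AB$-decomposition, observe that each block is a nonempty list of positive integers so a single part is bounded by the block sum, and use the increasing order of $A$ and $B$ to pass from the per-block bound to the statements about $a_1$, $a_n$, $b_1$, $b_m$. The only difference is cosmetic: you make the nonemptiness of the blocks and the reduction of items (2)--(3) to item (4) explicit, where the paper states these steps more tersely.
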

\begin{proof}
 These properties derive directly from the definition of sum composition.
 \begin{enumerate}[nosep]
  \item Immediate from the definition of the order relation $\leq$.
  \item The element $a_1$ is the minimum value appearing in $A$.
   Then $A_1$ contains at least one part greater or equal to $a_1$
   and consequently $ b_1 = \sigma(A_1) \geq a_1 $.
  \item The element $a_n$ is the maximum value appearing in $A$
   and appears in some $A_i$.
   Similarly, the element $b_m$ is the maximum value appearing in $B$.
   So, we have $ b_m \geq b_i = \sigma(A_i) \geq a_n $.
  \item If $ a_i \in A_j $, then $ b_j = \sigma(A_j) = \cdots + a_i + \cdots \geq a_i $.
  \item We have $ \sigma(A) = \sigma(A_1) + \cdots + \sigma(A_m) = b_1 + \cdots + b_m = \sigma(B) $.
 \end{enumerate}

\vspace*{-.55cm}
\end{proof}

We have also the following useful condition.
\begin{theorem}\label{theo:3}  
 Let $ A=[a_1,\ldots,a_n] $ and $B=[b_1,\ldots,b_m]$ be two integer partitions
 s.t. $A \leq B$.
 If the intersection $ A \cap B $ has $ k $ parts, then $ n \geq 2m-k $.
\end{theorem}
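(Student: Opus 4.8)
The plan is to fix an $AB$-de\-com\-pos\-ition $[A_1,\ldots,A_m]$ of $A$ (which exists because $A\le B$) and to bound $n=\card{A_1}+\cdots+\card{A_m}$ from below by separating the parts $A_i$ into \emph{singletons} and \emph{non-singletons}. A non-singleton part contributes at least $2$ to $n$. A singleton part $A_i=[c]$ satisfies $c=\sigma(A_i)=b_i$, so $b_i$ is simultaneously a part of $B$ and the unique part of $A$ lying inside $A_i$. Writing $s$ for the number of singleton parts of the decomposition, this gives at once
$n\ge 2(m-s)+s=2m-s$, so the whole statement reduces to proving $s\le k$.

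To prove $s\le k$, I would exhibit a sub-multiset of $A\cap B$ of size exactly $s$. Let $A_{i_1},\ldots,A_{i_s}$ be the singleton parts, with index sets $I_{i_t}=\{j_t\}$, so that $A_{i_t}=[a_{j_t}]$. Since the index sets $I_1,\ldots,I_m$ are pairwise disjoint, the indices $j_1,\ldots,j_s$ are distinct, hence $[a_{j_1},\ldots,a_{j_s}]$ is a sub-multiset of $A$; since $i_1,\ldots,i_s$ are distinct and $a_{j_t}=\sigma(A_{i_t})=b_{i_t}$, the very same list equals $[b_{i_1},\ldots,b_{i_s}]$ and is therefore also a sub-multiset of $B$. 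A multiset contained both in $A$ and in $B$ is contained in $A\cap B$, which has $k$ parts; hence $s\le k$, and combining with the previous paragraph $n\ge 2m-s\ge 2m-k$, as required.

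I expect the only genuinely delicate point to be the bookkeeping with multiplicities in this last step: one must make sure that $s$ distinct singleton parts really account for $s$ separate occurrences inside $A\cap B$, with no collisions. This is exactly where the disjointness of the $I_i$ is used (so the $a_{j_t}$ are $s$ honest occurrences of parts of $A$) together with the distinctness of the indices $i_t$ (so the $b_{i_t}$ are $s$ honest occurrences of parts of $B$); granted these, the definition of multiset intersection closes the gap. Everything else is routine. One may also observe that the bound is tight, e.g. for $A=[2,2,2,2]$ and $B=[2,2,4]$, where $m=3$, $k=2$ and $n=4=2m-k$.
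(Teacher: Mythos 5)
Your proof is correct and follows essentially the same route as the paper: fix an $AB$-decomposition, note each non-singleton part contributes at least $2$ to $n$, and bound the number $s$ of singleton parts by $k=\card{A\cap B}$. The paper states the bound $s\le k$ without justification, whereas you supply the multiset bookkeeping that makes it rigorous; this is a welcome refinement rather than a different argument.
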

\begin{proof}
 Since $ A \leq B $,  an $AB$-de\-com\-pos\-ition $ [A_1,\ldots,A_m] $ exists.
 Since $A$ and $B$ have $k$ common parts,
 there are at most $k$ partitions $A_i$ with only one part
 and all the remaining partitions contains at least two parts.
 Consequently $ n \geq k + 2(m-k) = 2m -k $.
\end{proof}

\begin{remark}
 By Theorem~\ref{theo:3},
 if $ A=[a_1,\ldots,a_n] $ and $B=[b_1,\ldots,b_m]$ are two integer partitions
 with $ A \cap B $ with $ k $ parts and $ n \leq 2m-k-1 $,
 then $B$ can not be a sum composition of $A$.
\end{remark}

The following lemma will be used in the subsequent sections to write the algorithms.
It allows you to proceed iteratively on the elements of B, by considering them one by one.
The validity of the hypothesis of this lemma is guaranteed at each step.
\begin{theorem}\label{lemma:sumnm1}
 Let $A=[a_1,\ldots,a_n]$ and $B=[b_1,\ldots,b_m]$ be two integer partitions s.t. $\sigma(A) = \sigma(B)$.
 If $B'\!=\!B\! \setminus\! [b_m]$ is sum composition of some subpartition $A'$ of $A$,
 then $B$ is sum composition of $A$.
\end{theorem}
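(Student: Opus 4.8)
The plan is to construct an $AB$-decomposition of $A$ directly from the given $A'B'$-decomposition of $A'$ by adjoining, as a new last block, exactly those parts of $A$ that do not occur in $A'$; the whole argument then collapses to a bookkeeping check on sums.

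First I would use the hypothesis $A'\leq B'$ to fix a decomposition $[A'_1,\ldots,A'_{m-1}]$ of $A'$ with $\sigma(A'_i)=b_i$ for $i=1,\ldots,m-1$ (recall $B'=[b_1,\ldots,b_{m-1}]$). Summing these equalities gives $\sigma(A')=\sigma(A'_1)+\cdots+\sigma(A'_{m-1})=b_1+\cdots+b_{m-1}=\sigma(B')=\sigma(B)-b_m$, and since $\sigma(A)=\sigma(B)$ by hypothesis we obtain $\sigma(A')=\sigma(A)-b_m$. Now set $A_m:=A\setminus A'$, which is a well-defined subpartition of $A$ because $A'\subseteq A$; by the identity $\sigma(A\setminus A')=\sigma(A)-\sigma(A')$ from Section~\ref{sec:pb} this gives $\sigma(A_m)=\sigma(A)-\sigma(A')=b_m$. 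Moreover $A_m$ is nonempty: otherwise $A'=A$ and hence $\sigma(A')=\sigma(A)$, contradicting $\sigma(A')=\sigma(A)-b_m$ together with $b_m\geq 1$.

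Finally I would observe that $[A'_1,\ldots,A'_{m-1},A_m]$ is a genuine decomposition of $A$: the index blocks of $A'_1,\ldots,A'_{m-1}$ form a set partition of the indices of a fixed copy of $A'$ inside $\{1,\ldots,n\}$, and adjoining the complementary indices (nonempty, as just shown) as the block of $A_m$ produces a set partition of $\{1,\ldots,n\}$. Since $\sigma(A'_i)=b_i$ for $i<m$ and $\sigma(A_m)=b_m$, this decomposition witnesses $A\leq B$, i.e. $B$ is a sum composition of $A$. I do not foresee a real obstacle: the only points needing a touch of care are the two degeneracy issues --- that $A\setminus A'$ is a legitimate subpartition operation and that it yields a nonempty block --- and both are immediate from $A'\subseteq A$ and $b_m\geq 1$. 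The substantive point is simply that the equality of global sums forces the appended block to have sum exactly $b_m$.
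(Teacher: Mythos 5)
Your proposal is correct and follows essentially the same route as the paper: take the given $A'B'$-decomposition, append $A_m = A\setminus A'$ as the final block, and use $\sigma(A)=\sigma(B)$ to conclude $\sigma(A_m)=b_m$. The only difference is that you also verify $A_m\neq\emptyset$, a small point the paper's proof leaves implicit.
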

\begin{proof}
 Since $B'$ is sum composition of $A'$, there exists an $A'B'$-de\-com\-pos\-ition $ [A_1,\ldots,A_{m-1}] $.
 Let $ A_m = A \setminus A' $. Then
 $ \sigma(A_m) = \sigma(A) - \sigma(A') = \sigma(A) - ( \sigma(A_1) + \cdots + \sigma(A_{m-1}) )
 = \sigma(A) - ( b_1 + \cdots + b_{m-1} ) = \sigma(A) - ( \sigma(B) - b_m ) = b_m $.
 Hence, $ [A_1,\ldots,A_{m-1},A_m] $ is an $AB$-de\-com\-pos\-ition.
\end{proof}

Let $A^{(k)}$ be the partition whose parts are the parts of a partition $A=[a_1,\ldots,a_n]$ less or equal to $k$ and
let $A^{[k]}$ be the partition whose parts are the parts of $A$ grater or equal to $k$.

\begin{theorem}\label{theo:aGb} 
 Let $A=[a_1,\ldots,a_n]$ and $B=[b_1,\ldots,b_m]$ be two integer partitions
 s.t. $B$ is sum composition of $A$. For every $ k = 1, \ldots, b_m $, we have
 $$
  \sigma(A^{(k)}) \geq \sigma(B^{(k)})
  \qquad\text{and}\qquad
  \sigma(A^{[k]}) \leq \sigma(B^{[k]})\, .
 $$
\end{theorem}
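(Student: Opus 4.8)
The plan is to fix an $AB$-de\-com\-pos\-ition $[A_1,\ldots,A_m]$ of $A$, which exists since $B$ is sum composition of $A$, and to read off both inequalities from it by grouping the blocks according to the size of the corresponding part $b_i$. For the first inequality I would argue as follows. If $b_i \leq k$, then $\sigma(A_i) = b_i \leq k$, and since all parts are positive integers, every part of $A_i$ must itself be $\leq k$; hence each part of $A_i$ is, with its multiplicity, one of the parts of $A$ that are $\leq k$, i.e. a part of $A^{(k)}$. Because $[A_1,\ldots,A_m]$ arises from a set partition of the index set of $A$, the parts occurring in the blocks with $b_i\leq k$ are pairwise-distinct occurrences among the parts of $A^{(k)}$, so their total sum does not exceed $\sigma(A^{(k)})$. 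Summing the corresponding identities $b_i=\sigma(A_i)$ then gives $\sigma(B^{(k)})=\sum_{i:\,b_i\leq k} b_i=\sum_{i:\,b_i\leq k}\sigma(A_i)\leq\sigma(A^{(k)})$.

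For the second inequality I see two routes. The direct, symmetric one uses item~4 of Lemma~\ref{lemma:1}: if $a_j\in A_i$ then $a_j\leq b_i$, so any part $a_j\geq k$ can occur only in a block $A_i$ with $b_i\geq k$. Collecting from each such block the sub-multiset $C_i$ of its parts that are $\geq k$, the $C_i$ together account for exactly all occurrences of parts $\geq k$ in $A$, i.e. for $A^{[k]}$, while $\sigma(C_i)\leq\sigma(A_i)=b_i$; hence $\sigma(A^{[k]})=\sum_{i:\,b_i\geq k}\sigma(C_i)\leq\sum_{i:\,b_i\geq k}b_i=\sigma(B^{[k]})$. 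The shorter route is to note that the parts $<k$ and the parts $\geq k$ partition the parts of $A$ (and of $B$), so $\sigma(A^{[k]})=\sigma(A)-\sigma(A^{(k-1)})$ and likewise for $B$; then apply the first inequality at $k-1$ (with the convention that $A^{(0)}$ is the empty partition) together with $\sigma(A)=\sigma(B)$ from item~5 of Lemma~\ref{lemma:1}. The boundary case $k=1$ is trivial, since there $A^{[1]}=A$ and $B^{[1]}=B$ and the two sums coincide.

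I do not expect a genuine obstacle here; the proof is essentially a careful accounting argument on the de\-com\-pos\-ition. The one point that needs care is the multiplicity bookkeeping: one must treat $A$ and each block $A_i$ as multisets of parts (equivalently, track the underlying index sets as in the definition of de\-com\-pos\-ition), so that the phrase ``the blocks with $b_i\leq k$ use only parts $\leq k$'' really yields the inequality $\sigma(B^{(k)})\leq\sigma(A^{(k)})$ between sums, and not merely a containment between sets of distinct parts. Once this is made precise, both inequalities follow immediately from Lemma~\ref{lemma:1}.
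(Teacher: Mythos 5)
Your proof is correct and follows essentially the same route as the paper: the same block-counting argument for $\sigma(A^{(k)})\geq\sigma(B^{(k)})$, and the same complementation via $\sigma(A^{[k]})=\sigma(A)-\sigma(A^{(k-1)})$ together with $\sigma(A)=\sigma(B)$ for the second inequality. Your additional direct argument for the second inequality (via item~4 of Lemma~\ref{lemma:1}) is also valid, and your remark about multiset bookkeeping makes explicit a point the paper leaves implicit in writing $A_1\cup\cdots\cup A_h\subseteq A^{(k)}$.
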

\begin{proof}
 Since $B$ is sum composition of $A$,
 there exists an $AB$-de\-com\-pos\-ition $ [A_1,\ldots,A_m] $.
 If $ B^{(k)} = [b_1,\ldots,b_h] $,
 then $ \sigma(A_1) = b_1 \leq k $, \ldots, $ \sigma(A_h) = b_h \leq k $.
 Hence, all elements of $A_1$, \ldots, $A_h$ are less or equal to $k$,
 and consequently $A_1 \cup \cdots \cup A_h \subseteq A^{(k)} $.
 Then $ \sigma(A^{(k)}) \geq \sigma(A_1) + \cdots + \sigma(A_h) = b_1 + \cdots + b_h = \sigma(B^{(k)}) $.

 If $k=1$, we have $ A^{[1]} = A $ and $ B^{[1]} = B $.
 So $ \sigma(A^{[1]}) = \sigma(B^{[1]}) $, being $ \sigma(A) = \sigma(B) $.
 If $k>1$, then $ A^{[k]} = A \setminus A^{(k-1)} $ and $B^{[k]} = B \setminus B^{(k-1)} $.
 So $ \sigma(A^{[k]}) = \sigma(A) - \sigma(A^{(k-1)}) \leq \sigma(B) - \sigma(B^{(k-1)}) = \sigma(B^{[k]}) $.
\end{proof}

We have also the following divisibility property
(saying, on the other hand, that the sum composition property is preserved by multiplication).
\begin{theorem}\label{theo:d1}
 Let $A$ 
 and $B$ 
 be two integer partitions
 s.t. $B$ is sum composition of $A$.
 If there exists an integer $d>1$ dividing all parts of $A$,
 then $d$ divides all parts of $B$.
\end{theorem}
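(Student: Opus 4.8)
The plan is to use the existence of an $AB$-decomposition directly. Since $B$ is sum composition of $A$, there is an $AB$-decomposition $[A_1,\ldots,A_m]$, meaning $\sigma(A_k)=b_k$ for each $k$. The key observation is that every part of $A_k$ is a part of $A$, hence divisible by $d$ by hypothesis. Therefore $b_k=\sigma(A_k)$ is a sum of multiples of $d$, and so $d$ divides $b_k$. Since this holds for every $k=1,\ldots,m$, the integer $d$ divides all parts of $B$.

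So the argument I would write is essentially one short paragraph. First I would invoke the definition of sum composition (equivalently $A\leq B$) to get the decomposition $[A_1,\ldots,A_m]$ of $A$ with $\sigma(A_i)=b_i$. Next I would note that, by the definition of decomposition, the parts of each $A_i$ are among the parts of $A$, so each such part is divisible by $d$. Then $b_i=\sigma(A_i)$ is a finite sum of integers each divisible by $d$, hence divisible by $d$. Concluding over all $i$ gives the claim.

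There is essentially no obstacle here: the statement is an immediate consequence of the definition of $AB$-decomposition together with the elementary fact that $d\mid x$ and $d\mid y$ imply $d\mid(x+y)$. The only thing worth stating cleanly is that a decomposition redistributes the \emph{same} parts of $A$ (no splitting of a part occurs), which is exactly the content of the definition via a set partition of the index set. One could optionally phrase the proof in the tiling language (each rectangle of $A$ has a side equal to one of its parts, a multiple of $d$), but the sum-based phrasing is cleaner.

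\begin{proof}
 Since $B$ is sum composition of $A$, there exists an $AB$-de\-com\-pos\-ition $[A_1,\ldots,A_m]$,
 so that $\sigma(A_i) = b_i$ for every $i = 1,\ldots,m$.
 By the definition of de\-com\-pos\-ition, the parts of each $A_i$ are parts of $A$,
 hence each of them is divisible by $d$.
 Therefore $b_i = \sigma(A_i)$ is a sum of integers all divisible by $d$,
 and consequently $d$ divides $b_i$.
 Since this holds for every $i = 1,\ldots,m$, the integer $d$ divides all parts of $B$.
\end{proof}
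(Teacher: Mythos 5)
Your proof is correct and follows exactly the same route as the paper's: take an $AB$-decomposition $[A_1,\ldots,A_m]$, note that every part of each $A_i$ is a part of $A$ and hence divisible by $d$, and conclude that $b_i=\sigma(A_i)$ is divisible by $d$. Your write-up is just slightly more explicit about why the parts of $A_i$ are parts of $A$, which is a minor clarity improvement over the paper's one-line argument.
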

\begin{proof}
 Since $A\leq B$, an $AB$-de\-com\-pos\-ition $ [A_1,\ldots,A_m] $ exists.
 Moreover, by hypothesis, every $a_i$ is divisible by $d$,
 we have at once that each $ b_i = \sigma(A_i) $ is divisible by $d$.
\end{proof}

\begin{example}
 Consider the integer partitions $ A = [50,100,100,200,250,300] $ and $ B = [300,300,400] $.
 Every element of $A$ and $B$ is divisible by $50$.
 So, by dividing by 50, we have the partitions $ A' = [1,2,2,4,5,6] $ and $ B' = [6,6,8] $.
 Since $ [[6],[1,5],[2,2,4]] $ is an $A'B'$-de\-com\-pos\-ition, then $A'\leq B'$,
 and consequently $A\leq B$.
 Notice that to obtain an $AB$-de\-com\-pos\-ition it is sufficient to multiply all parts of
 an $A'B'$-de\-com\-pos\-ition by $50$. \ebox
\end{example}

Finally, we have that the sum composition property is preserved by the union of partitions.
More precisely, we have the following result.
\begin{theorem}\label{theo:Union}
 Let $A'=[a'_1,\ldots,a'_{n'}]$, $B'=[b'_1,\ldots,b'_{m'}]$,
 $A''=[a''_1,\ldots,a''_{n''}]$ and $B''=[b''_1,\ldots,b''_{m''}]$
 be four integer partitions s.t. $B'$ is sum composition of $A'$ and $B''$ is sum composition of $A''$.
 Then $B=B' \cup B''$ is sum composition of $A=A' \cup A''$.
\end{theorem}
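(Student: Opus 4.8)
The plan is to produce an explicit $AB$-de\-com\-pos\-ition of $A=A'\cup A''$ by juxtaposing the two de\-com\-pos\-itions supplied by the hypotheses. Since $B'$ is sum composition of $A'$, there is an $A'B'$-de\-com\-pos\-ition $[A'_1,\ldots,A'_{m'}]$ with $\sigma(A'_i)=b'_i$ for $i=1,\ldots,m'$; likewise there is an $A''B''$-de\-com\-pos\-ition $[A''_1,\ldots,A''_{m''}]$ with $\sigma(A''_j)=b''_j$ for $j=1,\ldots,m''$. The candidate de\-com\-pos\-ition of $A$ is the concatenation $[A'_1,\ldots,A'_{m'},A''_1,\ldots,A''_{m''}]$, possibly after reordering its blocks so that their sums are listed in increasing order.

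First I would verify that this list is genuinely a de\-com\-pos\-ition of $A$ in the sense of Section~\ref{sec:pb}: the blocks $A'_1,\ldots,A'_{m'}$ arise from a set partition of the index set of $A'$ and the blocks $A''_1,\ldots,A''_{m''}$ from a set partition of the index set of $A''$; since the parts of $A=A'\cup A''$ are exactly the parts of $A'$ together with the parts of $A''$ (each occurrence kept with its multiplicity), the disjoint union of these two set partitions is a set partition of the index set of $A$, and the associated list of integer partitions is precisely the concatenation above. Hence it is a de\-com\-pos\-ition of $A$ into $m'+m''$ blocks.

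Next I would compare the multiset of block sums with $B$. By construction the sums are $b'_1,\ldots,b'_{m'},b''_1,\ldots,b''_{m''}$, i.e. exactly the parts of $B'$ together with the parts of $B''$; by the definition of the union of partitions, arranging them in increasing order yields $B=B'\cup B''$. Reordering the blocks of the concatenation accordingly gives a de\-com\-pos\-ition $[A_1,\ldots,A_{m'+m''}]$ of $A$ with $\sigma(A_k)=b_k$ for every $k$, which is exactly the condition $A\leq B$.

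The only point that needs care — and it is bookkeeping rather than a genuine obstacle — is the index/multiplicity accounting in the middle step: one must make sure that forming $A'\cup A''$ neither merges nor drops occurrences of equal parts, so that the two index-set partitions really do combine into one set partition of the index set of $A$. Once this is settled, the claim follows immediately. As an alternative one could run the same argument by induction on $m''$ using Theorem~\ref{lemma:sumnm1}, peeling off the largest part of $B''$ one at a time, but the direct concatenation is cleaner.
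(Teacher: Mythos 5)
Your proof is correct and follows essentially the same route as the paper, which simply says the two decompositions are ``properly merged'' into a decomposition of $A'\cup A''$; your version just spells out the reordering and the index/multiplicity bookkeeping that the paper leaves implicit.
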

\begin{proof}
 Since $A'\leq B'$,  an $A'B'$-de\-com\-pos\-ition $ [A_1',\ldots,A_{m'}'] $ exists.
 Similarly, since $A''\leq B''$,  an $A''B''$-de\-com\-pos\-ition $ [A_1'',\ldots,$ $A_{m''}''] $ exists.
 Hence, by properly merging $ [A_1',\ldots,A_{m'}'] $ and $ [A_1'',\ldots,$ $A_{m''}''] $,
 we obtain a de\-com\-pos\-ition of $A' \cup A''$ for which $ A' \cup A'' \leq B' \cup B'' $.
\end{proof}

\begin{example}
 Consider the partitions $ A' = [1,1,2,3,4] $ and $ B' = [3,3,5] $
 with the $A'B'$-de\-com\-pos\-ition $ [[1,2],[3],[1,4]]$,
 and the partitions $ A'' = [1,1,2,3,5] $ and $ B'' = [1,5,6] $
 with the $A''B''$-de\-com\-pos\-ition $ [[1],[1,3],[1,6]]$.
 Then $ A' \cup A'' = [1,1,1,1,2,2,3,3,4,5] $ and $ B' \cup B'' = [1,3,3,5,5,6] $,
 and $ [[1],[1,2],[3],[1,4],[2,3],$ $[1,6]]$ is a de\-com\-pos\-ition of
 $ A' \cup A'' $ for which $ A' \cup A'' \leq B' \cup B'' $. \ebox
\end{example}

\begin{remark}
 Given two partitions $A$ and $B$ with $A\leq B $,
 and given two subpartions $ A'\subseteq A $ and $ B' \subseteq B $ with $A'\leq B'$,
 for the complementary subpartitions $ A'' = A\setminus A' $ and $ B'' = B\setminus B' $
 it is not said that $ A'' \leq B'' $.
 Consider, for instance, the partitions $ A = [1,1,1,2,2,2,3] $ and $ B = [2,2,3,5] $
 and the partitions $ A' = [1,1,2,2,2] \subseteq A $ and $ B' = [3,5] \subseteq B $.
 Then $ A \leq B $ and $ A' \leq B' $.
 However, we have $ A'' = [1,3] $ and $ B'' = [2,2] $, and $ A'' \not\leq B'' $.
\end{remark}

\subsection{Reduction Properties}\label{subsection:existence}

In this section, we  present some properties
that can be used for eliminating values from the partitions involved in a sum composition,
without altering the property of being a sum composition.
These properties can positively effecting the execution time of the  existence algorithm.

\begin{theorem}\label{theo:theo1}
 Let $A=[a_1,\ldots,a_n]$ and $B=[b_1,\ldots,b_m]$ be two integer partitions
 s.t. $A \leq B$.
 If  two indices $ i $ and $j$ exist s.t. $a_i=b_j$,
 then $B'=[b_1,\ldots,b_{j-1},$ $b_{j+1},\ldots, b_m]$  is sum composition of $A'=[a_1,\ldots,a_{i-1}, a_{i+1},\ldots, a_n]$.
\end{theorem}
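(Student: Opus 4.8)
The plan is to start from an arbitrary enabling decomposition of $A$ with respect to $B$ and rearrange it so that the block whose sum is $b_j$ consists precisely of the single part $a_i$; once this normal form is reached, deleting that block and the part $b_j$ gives a decomposition of $A'$ witnessing $A'\le B'$.

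First I would invoke $A\le B$ to fix an $AB$-de\-com\-pos\-ition $[A_1,\ldots,A_m]$, so that $\sigma(A_k)=b_k$ for every $k$. The part $a_i$ lies in some block $A_k$. If $k=j$, then $\sigma(A_j)=b_j=a_i$ and, since every part is a positive integer, $A_j$ can contain no further part, so $A_j=[a_i]$ and we are already in the desired normal form.

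Next, if instead $a_i\in A_k$ with $k\ne j$, I would perform a swap: set $\tilde A_k=(A_k\setminus[a_i])\cup A_j$ and $\tilde A_j=[a_i]$, keeping all other blocks unchanged. A one-line check gives $\sigma(\tilde A_k)=\sigma(A_k)-a_i+\sigma(A_j)=b_k-a_i+a_i=b_k$ and $\sigma(\tilde A_j)=a_i=b_j$, while the multiset union of the new blocks is still $A$; hence the modified list is again an $AB$-de\-com\-pos\-ition, and now its $j$-th block equals $[a_i]$. This is the one delicate point of the argument: verifying that the swap simultaneously preserves every block sum and still yields a genuine de\-com\-pos\-ition of $A$. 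Everything else is bookkeeping.

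Finally, with an $AB$-de\-com\-pos\-ition of the form $[A_1,\ldots,A_{j-1},[a_i],A_{j+1},\ldots,A_m]$ in hand, I would delete the $j$-th block together with the part $b_j$: the remaining blocks partition the multiset $A\setminus[a_i]=A'$, and their sums are exactly the parts of $B\setminus[b_j]=B'$, so $A'\le B'$, i.e. $B'$ is a sum composition of $A'$. (Consistently, $\sigma(A')=\sigma(A)-a_i=\sigma(B)-b_j=\sigma(B')$ by Lemma~\ref{lemma:1}.)
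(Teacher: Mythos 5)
Your proof is correct and follows essentially the same route as the paper's: both arguments split on whether $a_i$ lies in the block $A_j$ (in which case $A_j=[a_i]$ and that block is simply deleted) or in some other block $A_k$, in which case the parts of $A_j$ are merged into the block for $b_k$ while $a_i$ is removed --- your ``swap to normal form then delete'' produces exactly the same multiset $(A_j\cup A_k)\setminus[a_i]$ that the paper assigns to $b_k$ directly. The only difference is presentational, plus your slightly more explicit justification that $\sigma(A_j)=a_i$ with positive parts forces $A_j=[a_i]$.
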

\begin{proof}
 Since $A\leq B$,  an $AB$-de\-com\-pos\-ition $ [A_1,\ldots,A_m] $ exists,
 where, by hypothesis, $ a_i = b_j = \sigma(A_j) $.
 We have the following two cases.
 \begin{enumerate}[nosep]
  \item If $ a_i \in A_j $, then $ A_j = [a_i] $.
   Hence $[A_1,\ldots,A_{j-1},A_{j+1}, \ldots,A_m] $ is an $A'B'$-de\-com\-pos\-ition.
  \item If $ a_i \not\in A_j $, then there exists an index $ k \ne j $ such that $ a_i \in A_k $.
   Let $ A'_k = (A_j\cup A_k)\setminus[a_i] $.
   Then, replacing $A_k$ by $A'_k$ in $ [A_1,\ldots,A_{j-1},A_{j+1},\ldots,A_m] $,
   we obtain a de\-com\-pos\-ition of $A'$ such that $ A' \leq B' $.
   Indeed, we have $ \sigma(A'_k) = \sigma(A_j) + \sigma(A_k) - a_i = b_j + b_k - a_i = b_k $.
 \end{enumerate}

\vspace*{-.5cm}
\end{proof}

Theorem~\ref{theo:theo1} can be easily generalized as follows.
\begin{theorem}\label{cor1}
 Let $A$ and $B$ be two integer partitions
 s.t. $B$ is sum composition of $A$.
 If $C$ is a subpartion of $A$ and $B$,
 then $B \setminus C$ is sum composition of $A \setminus C$.
\end{theorem}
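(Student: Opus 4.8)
The plan is to derive this from Theorem~\ref{theo:theo1} by induction on the number $\card{C}$ of parts of $C$, peeling off one shared part at a time. Theorem~\ref{theo:theo1} is exactly the one-part case: if a value $c$ is a part of both $A$ and $B$, then removing one copy of $c$ from each preserves the sum composition relation. The generalization should follow by iterating this, provided the hypothesis ``$C$ is a subpartition of $A$ and of $B$'' is maintained along the way.

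The base case is $\card{C}=0$: then $C$ is empty, $A\setminus C=A$ and $B\setminus C=B$, so the hypothesis $A\leq B$ is exactly the conclusion. For the inductive step I would assume $\card{C}=r\geq 1$ and that the statement holds whenever the common subpartition has fewer than $r$ parts. I would pick any part $c\in C$; since $C\subseteq A$ and $C\subseteq B$ there are indices $i,j$ with $a_i=c=b_j$, so Theorem~\ref{theo:theo1} yields that $B_1:=B\setminus[c]$ is sum composition of $A_1:=A\setminus[c]$. Writing $C_1:=C\setminus[c]$, which has $r-1$ parts, I would then apply the inductive hypothesis to $A_1$, $B_1$, $C_1$ to obtain $A_1\setminus C_1\leq B_1\setminus C_1$, and conclude by observing that $A_1\setminus C_1=A\setminus C$ and $B_1\setminus C_1=B\setminus C$.

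The one point that needs care — and it is the only real content beyond invoking Theorem~\ref{theo:theo1} — is verifying that $C_1$ is still a subpartition of both $A_1$ and $B_1$, so that the inductive hypothesis applies. This is a short multiplicity count: if $c$ occurs with multiplicity $\mu$ in $C$, then it occurs with multiplicity at least $\mu$ in $A$ and in $B$ because $C\subseteq A$ and $C\subseteq B$; after removing a single copy of $c$, its multiplicity in $C_1$ is $\mu-1$ while its multiplicity in $A_1$ and $B_1$ is at least $\mu-1$, and every other part of $C_1$ retains the multiplicity it had in $C$, which is unaffected by the removal. Hence $C_1\subseteq A_1$ and $C_1\subseteq B_1$. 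I do not expect any genuine obstacle here: once the multiplicity bookkeeping and the identities of the form $(A\setminus[c])\setminus(C\setminus[c])=A\setminus C$ are in place, the induction closes immediately.
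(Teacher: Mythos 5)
Your proof is correct and follows essentially the same route as the paper's: both peel off the parts of $C$ one at a time, applying Theorem~\ref{theo:theo1} at each step until $A\setminus C$ and $B\setminus C$ remain. Your explicit multiplicity bookkeeping (checking that $C\setminus[c]$ is still a subpartition of $A\setminus[c]$ and $B\setminus[c]$, which matters when $C$ has repeated parts) is a welcome detail that the paper's informal ``continuing in this way'' leaves implicit, but it does not change the argument.
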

\begin{proof}
 Suppose that $ C = [c_1,\ldots,c_k] $.
 Since $A$ and $B$ have a common part $c_1$,
 then $B'=B\setminus[c_1]$ is a sum composition of $A'=A\setminus[c_1]$,
 by Theorem~\ref{theo:theo1}.
 Since $A'$ and $B'$ have a common part $c_2$,
 then $B''=B'\setminus[c_2]$ is a sum composition of $A''=A'\setminus[c_2]$,
 always by Theorem~\ref{theo:theo1}.
 Continuing in this way, at the end we have that
 $B \setminus C$ is sum composition of $A \setminus C$.
\end{proof}

We also have the following result.
\begin{theorem}\label{cor11}
 If $A$ and $B$ are two partitions s.t. $A \leq B$,
 then, for every partition $C$, $A \cup C \leq B \cup C$.
\end{theorem}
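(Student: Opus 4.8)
The plan is to derive this as an immediate consequence of Theorem~\ref{theo:Union}. The key preliminary observation is that every partition is a sum composition of itself: if $C=[c_1,\ldots,c_k]$, then the decomposition $[[c_1],[c_2],\ldots,[c_k]]$ of $C$ into singleton blocks satisfies $\sigma([c_i])=c_i$ for each $i$, which is exactly the condition in the definition of $\leq$ (with $m=\card{C}$), so $C\leq C$.

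With this reflexivity in hand, I would apply Theorem~\ref{theo:Union} with $A'=A$, $B'=B$, $A''=C$ and $B''=C$. Its two hypotheses are precisely ``$B'$ is sum composition of $A'$'', i.e.\ $A\leq B$, which is given, and ``$B''$ is sum composition of $A''$'', i.e.\ $C\leq C$, which we just established. Hence its conclusion yields $A\cup C = A'\cup A''\leq B'\cup B'' = B\cup C$, as required.

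There is essentially no obstacle here; the only point to keep in mind is that the union operation on partitions was defined to re-sort the merged multiset of parts into increasing order, so that $A\cup C$ and $B\cup C$ are again genuine integer partitions — but this is already built into the statement and proof of Theorem~\ref{theo:Union}, so nothing extra is needed. If a self-contained argument were preferred, one could instead start from an $AB$-decomposition $[A_1,\ldots,A_m]$, adjoin the singleton blocks $[c_1],\ldots,[c_k]$, and reorder the resulting $m+k$ blocks so as to follow the order of the parts of $B\cup C$; the block sums then match the parts of $B\cup C$ by construction, exhibiting the desired decomposition of $A\cup C$.
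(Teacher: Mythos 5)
Your proof is correct and follows exactly the paper's own argument: the paper likewise observes that $C\leq C$ (stated as ``clearly'') and then applies Theorem~\ref{theo:Union} with $A'=A$, $B'=B$, $A''=B''=C$. Your explicit justification of reflexivity via the singleton decomposition is a welcome, if minor, elaboration.
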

\begin{proof}
 By hypothesis, we have $ A \leq B $, and, clearly, $ C \leq C $.
 Hence, by Theorem~\ref{theo:Union}, we have $ A\cup C \leq B \cup C $.
\end{proof}

Theorems~\ref{cor1} and~\ref{cor11} immediately implies the following property.
\begin{theorem}
 If $A$, $B$ and $C$ are three integer partitions,
 then $A\leq B$ if and only if $A \cup C \leq B \cup C$.
\end{theorem}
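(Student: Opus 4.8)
The plan is to split the biconditional into its two implications, each of which is a direct consequence of one of the two preceding theorems, with only a small bookkeeping identity to check.

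For the forward direction, suppose $A \leq B$. Then applying Theorem~\ref{cor11} with the given partition $C$ immediately yields $A \cup C \leq B \cup C$, so nothing further is needed here.

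For the converse, suppose $A \cup C \leq B \cup C$. First I would observe that, by the definition of the union of partitions, every part of $C$ is a part of $A \cup C$ and of $B \cup C$, so $C$ is a subpartition of both $A \cup C$ and $B \cup C$. Hence Theorem~\ref{cor1}, applied with $A \cup C$ and $B \cup C$ in the roles of the two partitions and $C$ as the common subpartition, gives
$$
 (A \cup C) \setminus C \;\leq\; (B \cup C) \setminus C .
$$
It then remains to note the identity $(A \cup C) \setminus C = A$ and, likewise, $(B \cup C) \setminus C = B$: removing from $A \cup C$ the parts of $C$ (counted with multiplicity, as the union adds the parts of $C$ to those of $A$) leaves exactly the parts of $A$. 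Substituting these identities gives $A \leq B$, completing the argument.

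The only point requiring a moment's care — and the closest thing to an obstacle — is this last identity, i.e.\ checking that ``$\cup$ then $\setminus$'' with the same partition $C$ is the identity map at the level of multisets of parts; this is routine from the definitions of $\cup$ and $\setminus$ given in Section~\ref{sec:pb}, and in particular from $\sigma(B\setminus A)=\sigma(B)-\sigma(A)$ together with the fact that $A\cup C$ contains each part of $C$ with at least its multiplicity in $C$. Everything else is an immediate invocation of Theorems~\ref{cor1} and~\ref{cor11}.
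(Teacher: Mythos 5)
Your proposal is correct and matches the paper's own (very brief) argument: the paper states this theorem as an immediate consequence of Theorems~\ref{cor1} and~\ref{cor11}, which is exactly the two-implication structure you use, including deleting the common subpartition $C$ from $A\cup C$ and $B\cup C$ for the converse. Your extra check that $(A\cup C)\setminus C = A$ at the multiset level is a reasonable piece of bookkeeping that the paper leaves implicit.
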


\begin{remark}
 Equivalently, if $A$ and $B$ are two integer partitions
 and $C$ is a subpartion of $A$ and $B$,
 then $A\leq B$ if and only if $A \setminus C \leq B \setminus C$.
\end{remark}

\newcommand{\com}[1]{\Comment{\color{gray}#1\color{black}}}

Finally, we have the following further simplification property
which will be used later to reduce the running time of the existence algorithm.

\begin{theorem}\label{theo:simpl2}
 Let $A=[a_1,\ldots,a_n]$ and $B=[b_1,\ldots,b_{m-1},b_m]$ be two integer partitions
 s.t. $ b_{m-1} < a_n < b_m $. Then $B$ is sum composition of $A$ if and only if
 $B'=[b_1,\ldots,b_{m-1}]\cup[b_m-a_n]$ is sum composition of $A'=[a_1,\ldots,a_{n-1}]$.
\end{theorem}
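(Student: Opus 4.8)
The plan is to prove both implications by directly manipulating $AB$- and $A'B'$-de\-com\-pos\-itions, the crucial observation being that the hypothesis $b_{m-1} < a_n$ forces the largest part $a_n$ of $A$ to lie in the last block of any $AB$-de\-com\-pos\-ition.

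For the forward direction, I would assume $A \leq B$ and fix an $AB$-de\-com\-pos\-ition $[A_1,\ldots,A_m]$. The part $a_n$ belongs to some block $A_j$, and by Lemma~\ref{lemma:1}(4) we get $a_n \leq \sigma(A_j) = b_j$. Since $b_i \leq b_{m-1} < a_n$ for every $i \leq m-1$, the only possibility is $j = m$. I then set $A_m' = A_m \setminus [a_n]$, so that $\sigma(A_m') = b_m - a_n$ (a positive integer because $a_n < b_m$), and observe that $[A_1,\ldots,A_{m-1},A_m']$ is a de\-com\-pos\-ition of $A' = A\setminus[a_n] = [a_1,\ldots,a_{n-1}]$ whose block sums are $b_1,\ldots,b_{m-1},b_m - a_n$; after permuting the blocks to match the increasing order of $B'$, this exhibits $A' \leq B'$.

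For the converse, I would assume $A' \leq B'$ and fix an $A'B'$-de\-com\-pos\-ition; it has $m$ blocks, one of which — call it $A_c'$ — has sum $b_m - a_n$, the new part of $B'$. I form $A_m := A_c' \cup [a_n]$, so $\sigma(A_m) = (b_m - a_n) + a_n = b_m$, and replace $A_c'$ by $A_m$. Since $A = A' \cup [a_n]$, the resulting list is a de\-com\-pos\-ition of $A$ with block sums $b_1,\ldots,b_{m-1},b_m$, i.e. (after reordering) the parts of $B$; hence $A \leq B$.

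Nothing here is technically heavy. The only point that genuinely uses the full hypothesis is the localization of $a_n$ in the last block, which relies on the strict inequality $b_{m-1} < a_n$ together with Lemma~\ref{lemma:1}(4); the condition $a_n < b_m$ is needed only to guarantee that $b_m - a_n$ is a legitimate (positive) part of $B'$. A minor bookkeeping subtlety is that a de\-com\-pos\-ition is recorded as an ordered list indexed against the sorted target partition, so in both directions one must permute the constructed blocks into the order induced by the sorted $B'$ (resp. $B$); this is immediate and, in the presence of repeated part-values, any valid matching will do. I do not anticipate any real obstacle.
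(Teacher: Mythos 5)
Your proposal is correct and follows essentially the same argument as the paper: both directions hinge on using Lemma~\ref{lemma:1}(4) together with $b_{m-1} < a_n$ to force $a_n$ into the last block, then removing (resp.\ reinserting) $a_n$ and adjusting that block's sum by $b_m - a_n$. Your remark about reordering the blocks to match the sorted $B'$ is a minor bookkeeping point the paper leaves implicit.
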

\begin{proof}
 If $A\leq B$, then  an $AB$-de\-com\-pos\-ition $ [A_1,\ldots,A_m]$ exists.
 Suppose that $ a_n \in A_k $.
 Then, by item 4 of Lemma~\ref{lemma:1}, we have $ a_n \leq b_k $.
 Hence, being $b_{m-1} < a_n < b_m$, we have $ k = m $, that is $ a_n \in A_m $.
 Let $ A'_m = A_m \setminus [a_n] $.
 Then $ [A_1,\ldots,A_{m-1},A_m'] $ is a de\-com\-pos\-ition of $A'$
 for which $A'\leq B'$ (indeed,  $\sigma(A'_m) = \sigma(A_m) - a_n = b_m - a_n $).

 If $A'\leq B'$, then  an $A'B'$-de\-com\-pos\-ition $ [A'_1,\ldots,A'_m]$ exists.
 If $A'_k$ corresponds to $b_m-a_n \in B'$,
 then $ [A'_1,\ldots,A'_{k-1}, A'_k \cup [a_n], A'_{k+1},\ldots,A'_m] $
 is an $AB$-de\-com\-pos\-ition.
 In fact, for all $A'_i$ with $i\ne k$ the values are the same,
 and for $A_k=A'_k \cup [a_n]$ we have $\sigma(A_k)\!=\!\sigma(A'_k)\!+\!a_n\!=\!b_m\!-\!a_n\!+\!a_n\!=\!b_m$.
\end{proof}

\begin{example}
 Consider  $ A = [1,1,2,2,4] $ and $ B = [1,3,6] $ with $ A\leq B $.
 Since $ 3 < 4 < 6 $, for the partitions $ A' = [1,1,2,2] $ and $ B' = [1,2,3] $ also holds $ A'\leq B' $. The viceversa is true as well. \ebox
\end{example}

\comment
{ ***INIZIO COMMENTO
\begin{theorem}\label{theo:simpl2}
 Let $A=[a_1,\ldots,a_n]$ and $B=[b_1,\ldots,b_m]$ be two integer partitions
 s.t. $B$ is sum composition of $A$.
 If $b_m > a_n > b_{m-1}$,
 then  $B'=[b_1,\ldots,b_{m-1}]\cup[b_m-a_n]$ is sum composition of $A'=[a_1,\ldots,a_{n-1}]$.
\end{theorem}
\begin{proof}
 Since $A\leq B$, there exists an $AB$-de\-com\-pos\-ition $ [A_1,\ldots,A_m] $.
 Suppose that $ a_n \in A_k $.
 Then, by item 4 of Lemma~\ref{lemma:1}, we have $ a_n \leq b_k $.
 Hence, being $b_m > a_n > b_{m-1}$, we have $ k = m $, that is $ a_n \in A_m $.
 Let $ A'_m = A_m \setminus [a_n] $.
 Then $ [A_1,\ldots,A_{m-1}A_m'] $ is a de\-com\-pos\-ition of $A'$
 for which $A'\leq B'$.
 Indeed, we have $ \sigma(A'_m) = \sigma(A_m) - a_n = b_m - a_n $.
\end{proof}

\begin{example}
 Consider the integer partitions $ A = [1,1,2,2,4] $ and $ B = [1,3,6] $ with $ A\leq B $.
 Since $ 6 > 4 > 3 $, we can apply Theorem~\ref{theo:simpl2}
 where $ A' = [1,1,2,2] $ and $ B' = [1,3,3] $ with $ A'\leq B' $. \ebox
\end{example}

*** FINE COMMENTO}

\section{The Exhaustive Sum Composition Algorithm}\label{sec:SCA}

In this section, we present an algorithm for generating all $AB$-de\-com\-pos\-itions
of two integer partitions $A=[a_1,\ldots,a_n]$ and $B=[b_1,\ldots,b_m]$.
To develop such an algorithm,
we represent an $AB$-de\-com\-pos\-ition $[A_1,\ldots,A_m]$ for $B$
as a list $[(b_1, A_1), \ldots, (b_m, A_m)]$,
where $\sigma(A_j) = b_j$ for every $j = 1, \ldots, m$ to better clarify the association between the parts in $B$ and the decomposition of $A$.

\begin{algorithm}[t]
\begin{footnotesize}
\begin{algorithmic}[1]
\Function {SumCompAux}{$b, A, p$}
    \IIf {$b < a_p$ {\bf or} $p > \cardd{A}$}
 	   	{\bf return} $\emptyset$
 	 \EndIIf
\State $quotient \leftarrow min(b~{\tt div}~a_p, m_p)$ \com{\small Where  $(a_p, m_p)$ is the $p^{th}$ element of  $A$}

   \State $Result \leftarrow \emptyset$	
  \For {$i \leftarrow 0 \ldots quotient$}
    \If {$i=quotient$ {\bf and} $quotient \cdot a_p  =b$}
    \State     $Result = Result \cup \{\{(a_p,quotient)\}\}$
    \Else
	  \State $\{A_1,\ldots,A_h\} \leftarrow \textsc{SumCompAux}(b - i  a_p, A, p+1)$
    \If {$h>0$}
    \If {$i=0$}
    \State  $Result = Result \cup \{A_1,\ldots,A_h\}$
    \Else
    \State {$Result = Result \cup \{A_1 \cup \{(a_p,i)\},\ldots,A_h\cup \{(a_p,i)\}\}$}
    \EndIf
    \EndIf
    \EndIf

  \EndFor
\State {\bf return} $Result$
\EndFunction
\end{algorithmic}
\end{footnotesize}
\end{algorithm}

\subsection{The  $\textsc{SumCompAux}$ function}

Before presenting the general algorithm,
we introduce the recursive function $\textsc{SumCompAux}$ having three parameters:
an integer partition $A$, an integer $b$ and a position $p$.
The purpose of this function is to identify all partitions of the elements of $A$
whose positions are greater (or equal to) $p$ such that their sum is equal to $b$.
Of course, when $p=1$ the recursive function $\textsc{SumCompAux}$ identifies all the subpartitions of $A$ such that their sum is  $b$.

\begin{example}\label{ex:esSumAux} Consider   $A=\{(50,1),(100,2),(200,1),(250,1),(300,1)\}$ and $B = [300, 300, 400]$ of Example~\ref{ex:es2}.  When  $\textsc{SumCompAux}$ is called on $A$, a part  $b \in B$,  and position $1$, we expect that it provides all the subpartitions  $A_1,\ldots, A_h$ of $A$ s.t. $\sigma(A_1)=\ldots=\sigma(A_h)=b$.  Specifically:
 \begin{itemize}[nosep]
  \item $\textsc{SumCompAux}(300,A,1)\!=\!\{\!\{(50,1),(250,1)\},\!\{(100,1),(200,1)\},\!\{(300,1)\}\!\}$,
  \item $\textsc{SumCompAux}(400,A,1)=\{\{(50,1),(100,1),(250,1)\},\{(100,2),(200,1)\},\{(100,1),(300,1)\}\}$.
 \end{itemize}
 Note that $\textsc{SumCompAux}(300,A,4)=\{\{(300,1)\}\}$. That is, $\{(250,1),(300,1)\}$ are the only parts of $A$ that are considered for determining the value $300$. \ebox
\end{example}

At each step of the recursion, 
the function evaluates all the  subpartitions of $A$ that can be generated by taking into account the element at position $p$ (denoted $a_p$). If $a_p$ is greater than $b$ (since  $A$ is ordered),  no further subpartitions can be determined and the empty set is returned (see Item 3 of Lemma~\ref{lemma:1}). The same conclusion is obtained when the position $p$ is greater than $\card{A}$. 
When these cases are not met, it means that $b \geq a_p$. Therefore, we can consider the possibility to use $a_p$ (or not) in the identification of the subpartitions whose sum is equal to $b$.
For this purpose we determine $quotient$ as the minimum between the possible multiplicity of $a_p$ (denoted with $m_p$) and the integer division of $b$ by $a_p$. It represents the maximum number of times that $a_p$ can be summed to obtain the value $b$.
When the current element is not considered for determining the subpartitions of $A$ whose sum is $b$, the function will return as output the subpartitions that can be determining starting from the element of $A$ of position $p+1$ for the same element $b$.
When the current element is taken $i>0$ times, the recursive call is invoked on the value ($b-i  a_p$) starting from the next position. Therefore, it determines the subpartitions $A_1, \ldots, A_h$ of $A$ whose sum is ($b-i  a_p$) and will return as output these sets in which element $a_p$ is taken $i$ times. When $ quotient \cdot a_p =b$ there is no needs to proceeds with further recursive calls, and the pair $(a_p, quotient)$ can be directly returned. Note that, when a call to the recursive function returns the empty set, it means that along this path no results can be obtained.

\begin{figure}[t]
\centering \includegraphics[scale=.5]{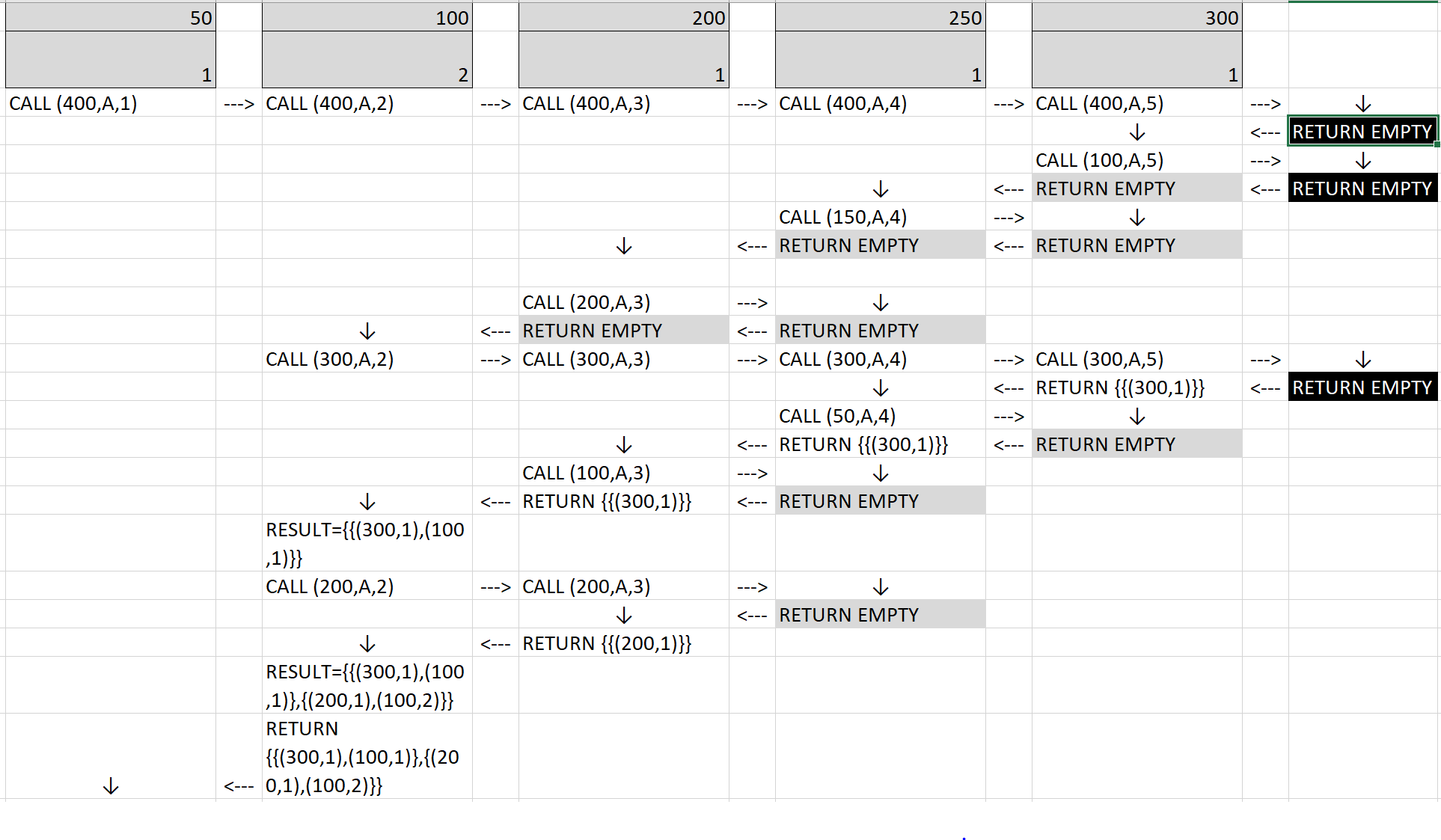}
\caption{Trace of the execution of function $\textsc{sumCompAux}$}\label{fig:sumCompAux}
\end{figure}

\begin{example} Figure~\ref{fig:sumCompAux} reports an excerpt of the trace of the calls to function $\textsc{sumCompAux}$. 
In the top of the figure the parts of $A$ with their repetitions are shown. The string ``return empty" with the black background represents the situation in which we have reached the end of the partition without finding sub-lists whose sum is $b$, whereas 
the gray background means that it is useless to proceed because $a_p > b$.
Each column reports all the calls for the same element of $A$.

The first line reports the calls in which the current element of $A$ is not selected. This leads to reach the end of partition $A$ without finding any sub-list of $A$ whose sum is $400$. In returning from the recursive call, the function tries to create a single instance of $300$, and to identify instances of $A$ of position $6$ whose sum is $100$. However, since we have reached the end of $A$, the empty set is returned. The same empty sets are returned in the call backs of the function until the second column ({\tt call(300,A,2)}) is reached. For this case, indeed, we are able to identify a partition  $Result=\{\{(300,1),(100,1)\}\}$.  $Result=\{\{(200,1),(100,2)\}\}$ is another subset that is obtained for {\tt call(200,A,2)}. Since we have reached to maximum number of repetitions of  $100$ in $A$, the last two results are collected in a partition and returned.

The algorithm proceeds in identifying new results when the part $50$ in $A$ is selected once. We do not further present the other calls, that are analogous to those we have now discussed. \ebox
\end{example}

The following theorem shows that  $\textsc{SumCompAux}$  invoked on $A$, $b$ and position $p$
provides all the partitions $A_j$ s.t. $ \sigma(A_j) = b $.
This theorem proves that this is true for all $a_i \in A$ where $i\geq p$.

\begin{theorem}\label{teoSumAux}
Let  $A=[(a_1,_1),\ldots, (a_k,m_k)]$ be a partition, and $b$, $p$ ($p \leq k$) two positive integers. $\!\!$
The application of the function $\!\!\textsc{ SumCompAux}$ to $A$, $b$ and $p$
always terminates and returns all the partitions  $A_1 \subseteq A$, \ldots, $A_s  \subseteq A$,  s.t.:
\begin{itemize}[nosep]
 \item the position of the elements in $A_1$, \ldots, $A_s$ is greater than (or equal to) $p$;
 \item the sum of the elements in each subpartition is $b$;
 \item no other subpartitions of $A$ with elements of position greater than (or equal to) $p$
  can be identified whose sum is $b$.
\end{itemize}
\end{theorem}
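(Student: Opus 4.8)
The plan is to prove the three claims by induction on the quantity $k - p + 1$, i.e. on the number of distinct parts of $A$ that are still available to the call (this is exactly the parameter that strictly decreases in every recursive invocation, so it also gives termination for free). Throughout, I would phrase things in terms of the set $\mathcal{S}(b,A,p)$ of all subpartitions of $A$ using only parts at positions $\ge p$ and summing to $b$, and show $\textsc{SumCompAux}(b,A,p) = \mathcal{S}(b,A,p)$. The first bullet (positions $\ge p$) and termination are bookkeeping observations about the recursion, so the real content is the equality of the output set with $\mathcal{S}(b,A,p)$; I would split this into \emph{soundness} (everything returned is in $\mathcal{S}(b,A,p)$) and \emph{completeness} (everything in $\mathcal{S}(b,A,p)$ is returned).

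First I would dispatch the base cases. If $b < a_p$ or $p > \cardd{A}$, then $\mathcal{S}(b,A,p) = \emptyset$: no available part is $\le b$ (using that $A$ is in increasing order and Item 3 of Lemma~\ref{lemma:1}, or simply that $a_p \le a_{p+1} \le \cdots$), or there are no available parts at all; and the function returns $\emptyset$, matching. For the inductive step assume $p \le \cardd{A}$ and $b \ge a_p$, set $q = \min(b\ \mathtt{div}\ a_p,\, m_p)$, and observe that any $S \in \mathcal{S}(b,A,p)$ uses the part $a_p$ exactly $i$ times for a unique $i \in \{0,1,\ldots,q\}$ — the upper bound $q$ being forced both by the multiplicity $m_p$ and by $i\,a_p \le b$. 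This partitions $\mathcal{S}(b,A,p)$ into the disjoint union, over $i = 0,\ldots,q$, of $\{\,S' \cup \{(a_p,i)\} : S' \in \mathcal{S}(b - i a_p,\, A,\, p+1)\,\}$, with the convention that $(a_p,0)$ contributes nothing. I would then check that the loop body computes precisely this: for each $i$, when $i = q$ and $q\,a_p = b$ it adjoins $\{(a_p,q)\}$ directly (this is the case $\mathcal{S}(0,A,p+1) = \{\emptyset\}$, handled without a recursive call), and otherwise it recurses on $(b - i a_p,\, A,\, p+1)$ — whose output equals $\mathcal{S}(b - i a_p, A, p+1)$ by the induction hypothesis, since $(k) - (p+1) + 1 < k - p + 1$ — and tags each returned partition with $(a_p,i)$ when $i > 0$. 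Taking the union over $i$ then yields exactly $\mathcal{S}(b,A,p)$, giving both soundness and completeness simultaneously. The third bullet is just completeness restated, and termination follows because $p$ strictly increases at each recursive call and is bounded by $\cardd{A}+1$.

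The main obstacle, and the part deserving the most care, is verifying that the somewhat intricate case split in the loop body (the special-cased branch $i = q \wedge q\,a_p = b$, the separate handling of $i = 0$ versus $i > 0$, and the early test $h > 0$) really does enumerate the disjoint union above with no omissions and no duplicates. In particular one must check that the branch returning $\{(a_p,q)\}$ directly is not \emph{also} reachable through a recursive call that would re-emit the same partition (it is not, because that branch fires exactly when $b - q a_p = 0$, and the recursive alternative is only taken when $i < q$ or $q a_p \ne b$), and that the guard $h > 0$ correctly prunes dead branches without discarding any genuine solution (a branch with $h = 0$ has $\mathcal{S}(b - i a_p, A, p+1) = \emptyset$ by the induction hypothesis, so nothing is lost). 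Once this case analysis is pinned down, the induction closes routinely.
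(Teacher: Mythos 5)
Your proposal is correct and follows essentially the same route as the paper's proof: induction on the number of distinct parts still available (the paper uses $q=k-p$), with the inductive step decomposing the solution set by the multiplicity $i\in\{0,\ldots,\mathit{quotient}\}$ of $a_p$ and invoking the induction hypothesis on the recursive call at position $p+1$. Your explicit set-equality framing via $\mathcal{S}(b,A,p)$ and the disjoint-union argument is a somewhat tidier packaging of the paper's Case 2.2, and your observation that the $i=\mathit{quotient}$, $\mathit{quotient}\cdot a_p=b$ branch corresponds to $\mathcal{S}(0,A,p+1)=\{\emptyset\}$ matches the paper's handling of that special case.
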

\begin{proof}
 Let $ k = \cardd{A} $. We have the following two cases.
 \begin{description}[nosep,leftmargin=0.4cm,align=left,labelwidth=1cm]
  \item[\small Case 1.] If $p>k$, the condition at line 1
   guarantees that no solution is provided: an empty set is returned and the function terminates.
  \item[\small Case 2.] If $p\leq k$, we proceed by induction on $q = k - p$.
   \begin{description}[nosep,leftmargin=0.4cm,align=left,labelwidth=1cm]
    \item[\small Case 2.1.] If $q=0$, then $p=k$.
     In this case, the only element of $A$ to be evaluated is $(a_k,m_k)$.
     The only possible solution is
     \begin{equation}\label{eq:IntDiv}
      v=b/a_k
     \end{equation}
     with $v \leq m_k$.
     We have the following subcases.
     \begin{description}[nosep,leftmargin=0.4cm,align=left,labelwidth=1cm]
      \item[\small Case 2.1.1.] If $b< a_k$, then Equation (\ref{eq:IntDiv}) has no solution
       and the function returns an empty set and terminates (line~2).
      \item[\small Case 2.1.2.] If $b\geq a_k$ and Equation (\ref{eq:IntDiv})
       has no integer solution or the integer solution is greater than $m_p$,
       then the function returns the empty set.
       In this case, in fact, no element ($a_p,t$) with $t \leq m_p$ exists s.t. $b=a_p t$.
       The first statement of the {\em else} branch (line 9) is executed
       and the function $\textsc{SumCompAux}$ is called passing a position $p+1>k$.
       For the part proved in Case 1, the empty set is returned back;
       therefore the condition at line 10 is false, the empty set is returned and the function terminates.
      \item[\small Case 2.1.3.] If $b\geq a_k$ and Equation (\ref{eq:IntDiv}) has an integer solution
       and the integer solution is lesser or equal than $m_p$,
       then the algorithm provides the set $\{(a_k,t)\}$ as result
       which solves exactly Equation~\ref{eq:IntDiv} with $t \leq m_k$ ($k=p$).
       In fact, in this case, the calculation of $quotient$ gives exactly $t$.
       In the {\em for} loop (line 5) there is only one case that satisfies the condition
       \begin{equation}\label{eq:IfStmt}
        i=quotient \qquad\text{and}\qquad quotient \cdot a_p  =b
       \end{equation}
       In this case, when the condition at line 6 is satisfied, the set $\{(a_k,quotient)\}$ is returned. Otherwise, the {\em else} branch (line 8) is executed and, using the same considerations of Case 2.1.2, $Result$ is left unaltered. At the end, the function correctly returns back the set $\{(a_k,quotient)\}$ and terminates.
       This closes the proof of Case 2.1
     \end{description}
    \item[\small Case 2.2.] If the thesis is true for $q-1$,
     then we have to prove that it holds also for $q$.
     \begin{description}[nosep,leftmargin=0.4cm,align=left,labelwidth=1cm]
      \item[\small Case 2.2.1.] If $b< a_{p}$, then Equation (\ref{eq:IntDiv}) has no solution
       neither for position $p$ nor for all positions greater than $p$:
       the empty set is returned and the function terminates.
      \item[\small Case 2.2.2.] If $b \geq a_{p}$, then we have the solution
       $ A'_j = [ (a_{p},m'_{p,j}),\ldots,(a_k,m'_{k,j}) ] $,
       where the elements with multiplicity $m'_{\_,j}=0$ have been discarded and
       \begin{equation}\label{eq:EqSum0}
        m'_{p,j} \leq m_{p} \qquad\text{and}\qquad b=\sum_{i=p}^{k} m'_{i,j} a_i  
       \end{equation}
       However, the last equation can be rewritten by extracting the part of position $p$ from the sum and moving it to the left member of the equation as follows: 
       \begin{equation}\label{eq:EqSum1}
        b-m'_{p,j}a_{p}  = \sum_{i=p+1}^{k} m'_{i,j} a_i\, .
       \end{equation}
       The second member of Equation (\ref{eq:EqSum1})
       is the sum of the components of the set $A''_j=\{(a_{p+1},m'_{p+1,j}),\ldots, (a_k,m'_{k,j})\}$.
       The first member of (\ref{eq:EqSum1}) is formed by varying the $m'_{p,j}$ from 0
       (when  $(a_{p},m'_{p,j})$ is not present) to a value $v$, where
       \begin{equation}\label{eqConstr1}
        v \leq b/a_{p} \qquad\text{and}\qquad v\leq m_{p}\, .
       \end{equation}
       The algorithm, after checking the condition at line 2, which is not satisfied,
       continues with the calculation of $quotient$.
       $quotient$ is exactly the maximum value as described in (\ref{eqConstr1}).
       The {\em for} loop (from line 5 to line 18) spans the $i$ values from 0 to $quotient$.

       When $i<quotient$ or $i=quotient$, but $quotient$ is not an exact divisor of $b$,
       the function enters in the {\em else} branch at line 8.
       In this branch, the set $\{A_1,\ldots,A_h\}$ is calculated by function $\textsc{SumCompAux}$
       with parameters $b - i  a_p$, $A$ and $p+1$ as showed in Equation (\ref{eq:EqSum1}).

       Due to the induction hypothesis, with this call,
       we have returned the set of partitions for position $p+1$ and $b'=b-i  a_{p}$,
       as stated in (\ref{eq:EqSum1})
       (the $m'_{p,j}=i$ and the sum of the elements of the returned set is $b'=b-i  a_{p}$).
       If this set is empty for the position $p+1$,
       it means that no solution exists also for position $p$.
       If the returned set is not empty, then it is added to $Result$ based on the value of $i$:
       \begin{itemize}[nosep]
        \item if $i\!=\!0$ by adding the returned set to $\!Result$ as it is because the left member of (\ref{eq:EqSum1}) is $\!b$,
        \item if $i>0$ we add to the returned set of position $p+1$, also the pair $(a_{p},i)$ because of  (\ref{eq:EqSum1}); 
       \end{itemize}
       Then, we can guarantee that $Result$ contains only and all the correct subpartitions
       for the case $(a_{p},i)$ with $i<quotient$ or $i=quotient$ but $quotient$ is not an exact divisor of $b$.
       When $i=quotient$ and $quotient$ is an exact divisor of $b$,
       then the only possible solution for this case is the pair $(a_{p},quotient)$ ($i=quotient$)
       because $b= quotient \cdot a_{p} $;
       then we add to $Result$ the pair $(a_{p},quotient)$.
       When the for loop ends, the function returns the $Result$ so calculated and terminates the execution.
       Thus, Case 2.2 is proved.
     \end{description}
   \end{description}
 \end{description}
 
 \vspace*{-.5cm}
\end{proof}

\begin{algorithm}[t]
\caption{$\textsc{SumComp}$}\label{alg1}
\begin{footnotesize}
\begin{algorithmic}[1]
	\Statex
	\Statex \textbf{Input:} Two partitions:  $A=\{(a_1,m_1),\ldots, (a_k,m_k)\}$, $B=[b_1,\ldots, b_m]$
	\Statex \textbf{Output:} The set of all $AB$-de\-com\-pos\-itions enabling the sum composition from $A$ to $B$  (when the sum composition exists, the empty set otherwise)
	
	\Statex
\IIf {$\sigma(A)  \not = \sigma(B)$}
 	   		 {\bf return} $\emptyset$
 	   \EndIIf \com{Item 5 of Lemma~\ref{lemma:1}}
 \IIf {$m > \card{A}$}
 	   		 {\bf return} $\emptyset$
 	   \EndIIf \com{Item 1 of Lemma~\ref{lemma:1}}
 	   \IIf {$a_1 > b_1$ {\bf or} $a_k > b_m$}
 	   		 {\bf return} $\emptyset$
 	   \EndIIf \com{Item 3 of Lemma~\ref{lemma:1}}
	
\IIf {$C = A \cap B$ s.t. $\card{C}>0$ and $\card{A} \leq 2m - \card{C} -1$}
 	   		 {\bf return} $\emptyset$
 	 \EndIIf \com{Theorem~\ref{theo:3}}
	\IIf {$\exists b \in B: \sum_{(a_i,m_i) \in A \land a_i \leq b}m_i a_i  < \sum_{b_j \in B \land b_j \leq b}b_j$}
		{\bf return} $\emptyset$
	\EndIIf \com{Theorem~\ref{theo:aGb}}
	\State $Result \leftarrow \emptyset$
         			\If { $m>1$}
				\State ${\mathcal A} \leftarrow \textsc{SumCompAux}(b_{1}, A, 1)$
				\IIf {$\mathcal A=\emptyset$}
					{\bf return} $\emptyset$
				\EndIIf
				\State $B'  \leftarrow  B \setminus [b_1]$
              			\For {{\bf each} $A_h \in {\mathcal A}$}
						\State ${\mathcal F} \leftarrow \textsc{SumComp}(A \setminus A_h, B')$
						\For {{\bf each} $f \in {\mathcal F}$}
							\State $Result \leftarrow Result \cup \{f \cup \{(b_{1},A_h)\}\}$
						\EndFor
       				\EndFor
			\Else
          				\State $Result \leftarrow \{ \{(b_{1},A)\}\}$
	   		\EndIf

\State {\bf return} $Result$
\Statex
\end{algorithmic}
\end{footnotesize}
\end{algorithm}

\subsection{The  $\textsc{SumComp}$ Algorithm}

The function $\textsc{SumCompAux}$ presented in the previous section
is the core function of our approach because  it generates all possible subpartition $A_1$, \ldots, $A_h$ of $A$
whose sum is equal to a $b$.
If $b$ is the first element of the partition $B$,
then each $A_i$ ($1 \leq i \leq h$) is the first element of an $AB$-de\-com\-pos\-ition.
Moreover, the problem reduces to find all $A'B'$-de\-com\-pos\-itions,
where $ A' = A \setminus A_i $ ($1 \leq i \leq h$) and $ B' = B \setminus [b_1] $.

Starting from this remark,
we have developed Algorithm~\ref{alg1}  ($\textsc{SumComp}$) for the generation of all $AB$-de\-com\-pos\-itions,
that works taking into account, in succession, the single elements of $B$.
In particular, this algorithm takes advantage of the theoretical results
obtained in Section~\ref{sec:properties}.

Algorithm~\ref{alg1} works as follows.
First, until line 6, it applies Items 5, 1 and 3 of Lemma~\ref{lemma:1},
and Theorems~\ref{theo:3} and \ref{theo:aGb} for checking necessary conditions of existence of sum composition.
When these conditions are met, the algorithm proceeds in the examination of all the elements in $B$,
otherwise, we can guarantee that no enabling de\-com\-pos\-itions can be determined.
Then, recursively, the algorithm  analyzes the length of  $B$.
When $\card{B}=m=1$, all elements of the original $B$ have been analyzed
and the pair $(b_1, A)$ is included in $Result$ (see line 18 of the branch {\em else}).
Indeed, since the condition at line $2$ is false, it means that $b_1$ is the sum of the elements of $A$.
When $\card{B}=m> 1$,
the function $\textsc{sumCompAux}$ is invoked (line 8) on the first element $b_{1}$ of $B$,
the current partition $A$ and the position $1$.
In this way, we determine the subpartitions of $A$ whose sum is $b_1$.
If the result of this invocation is the empty set,
it means that no sub-partition $A$ exist whose sum is  $b_1$ and then the empty set is returned.
Otherwise, we have identified all possible subpartition $\mathcal{A}$ for element $b_1$,
and $\textsc{sumComp}$ is recursively invoked for each of the partition $A_h \in \mathcal{A}$
on $A \setminus A_h$ and $B \setminus [b_1]$.
The recursive call will return the de\-com\-pos\-itions $\mathcal{F}$
for the elements of $B \setminus [b_1]$ starting from a partition $A$
from which $A_h$ have been removed.
At this point, all the elements in $\mathcal{A}$ should be combined
with all the elements in $\mathcal{F}$ in order to determine
all the $AB$-de\-com\-pos\-itions between $A$ and $B$ (line 14 of the algorithm).

\begin{figure}[t]
\centering
\includegraphics[scale=0.68]{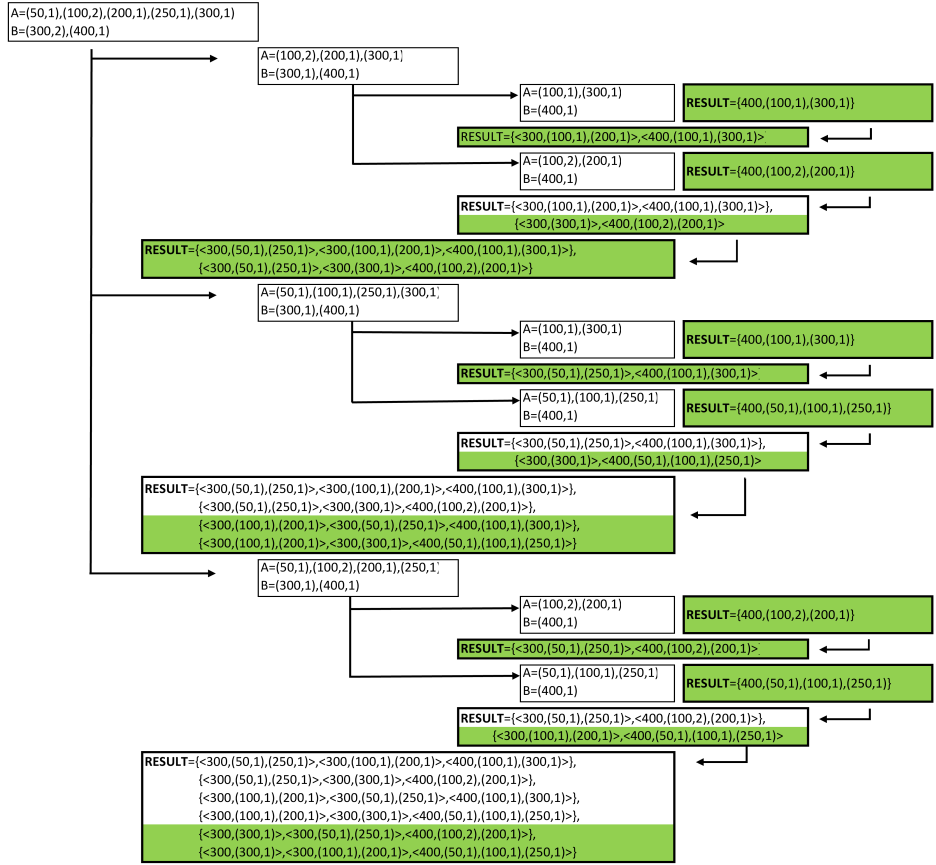}
\caption{Recursive calls of the $\textsc{sumComp}$ Algorithm}\label{fig:sumComp}
\end{figure}

As shown by the following example, the application of  $\textsc{sumComp}$ returns the set $Result$ that contains all the $AB$-de\-com\-pos\-itions that can be determined from the initial $A$ and $B$.

\begin{example}\label{ex:sumcomp}
 Let 
 $A=[(50,1),(100,2),(200,1),(250,1), (300,1)]$ and $B = [300, 300, 400]$ the partition of Example~\ref{ex:es2}.
 The first line of Figure~\ref{fig:sumComp} reports the initial partitions $A$ and $B$.
 The first vertical arrow on the left leads to the three possible $A_1$, $A_2$, $A_3$ subpartitions
 whose sum is $300$ (first element of $B$) that can be obtained from $A$
 by calling $\textsc{SumCompAux}$ at line 8 of the algorithm.
 Starting from them, the input of recursive calls to $\textsc{SumComp}$ (line 12)
 are determined as follows $(A \setminus A_1, B \setminus [300])$,
 $(A \setminus A_2, B \setminus [300])$, and $(A \setminus A_3, B \setminus [300])$.
 Then, the second vertical line represents the subpartitions whose sum is $300$
 (second element of the initial $B$) that can be obtained from the current $A$.
 At this point, the backward arrow leads to the partial and final results (in total six)
 obtained by identifying the subpartitions of $A$ whose sum is $400$.
 These partial results are accumulated in the $Result$ variable from the recursive calls.
 The results in green are the new one that are added to the existing ones (in white) in the current step.
 \ebox
\end{example}

\begin{theorem}\label{theo:correct1}
 Let $A=[(a_1,m_1),\ldots, (a_k,m_k)]$  and $B=[b_1,\ldots, b_m]$ be two partitions.
 The application of Algorithm~\ref{alg1} to $A$ and $B$
 always terminates and returns all possible $AB$-de\-com\-pos\-itions
 (the empty set when $ A \not\leq B $).
\end{theorem}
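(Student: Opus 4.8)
The plan is to prove termination and correctness of Algorithm~\ref{alg1} by induction on $m = \card{B}$, leaning on Theorem~\ref{teoSumAux} (correctness of $\textsc{SumCompAux}$) and Theorem~\ref{lemma:sumnm1} (the iterative reduction on $B$) as the two main tools. First I would dispose of the preliminary filter lines~1--6: each of the \textbf{return} $\emptyset$ guards there implements a \emph{necessary} condition for $A \leq B$ (Items~5,~1,~3 of Lemma~\ref{lemma:1}, Theorem~\ref{theo:3}, and Theorem~\ref{theo:aGb} respectively), so if any of them fires then indeed $A \not\leq B$ and returning $\emptyset$ is correct; if none fires, we know $\sigma(A) = \sigma(B)$, and we pass to the recursive body with this invariant in hand. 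Termination is easy to see: the guard $m > \card{A}$ together with each recursive call at line~12 passing $B' = B \setminus [b_1]$ (strictly shorter) and $\textsc{SumCompAux}$ itself terminating (Theorem~\ref{teoSumAux}) means the recursion depth is bounded by $m$, and each level does finitely much work.

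For the inductive correctness argument, the base case is $m = 1$: here $\sigma(A) = \sigma(B) = b_1$ has already been checked, so $[(b_1, A)]$ is the unique $AB$-de\-com\-pos\-ition, which is exactly what the \emph{else} branch at line~18 returns. For the inductive step assume the claim holds for all partition pairs whose second component has length $m-1$. At line~8 the call $\textsc{SumCompAux}(b_1, A, 1)$ returns, by Theorem~\ref{teoSumAux}, precisely the set $\mathcal{A}$ of \emph{all} subpartitions $A_h \subseteq A$ with $\sigma(A_h) = b_1$. I would then argue both inclusions. For soundness: if $f \cup \{(b_1, A_h)\}$ is put into $Result$ at line~14, then $f \in \mathcal{F} = \textsc{SumComp}(A \setminus A_h, B')$, and $\sigma(A \setminus A_h) = \sigma(A) - b_1 = \sigma(B) - b_1 = \sigma(B')$, so the induction hypothesis applies to the recursive call and tells us $f$ is a genuine $(A\setminus A_h)B'$-de\-com\-pos\-ition; appending the part $(b_1, A_h)$ yields a valid $AB$-de\-com\-pos\-ition since $\sigma(A_h) = b_1$ and the parts of $A_h$ together with those of $A \setminus A_h$ exhaust $A$. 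For completeness: given any $AB$-de\-com\-pos\-ition $[(b_1,A_1),\ldots,(b_m,A_m)]$, its first component $A_1$ satisfies $\sigma(A_1) = b_1$, so $A_1 \in \mathcal{A}$ by Theorem~\ref{teoSumAux}; the remaining list $[(b_2,A_2),\ldots,(b_m,A_m)]$ is an $(A\setminus A_1)B'$-de\-com\-pos\-ition, hence appears in $\mathcal{F}$ for the iteration $A_h = A_1$ by the induction hypothesis, and so the full de\-com\-pos\-ition is reconstructed at line~14. Finally, if $\mathcal{A} = \emptyset$ then no subpartition of $A$ sums to $b_1$, so no $AB$-de\-com\-pos\-ition can exist, and returning $\emptyset$ (line~9) is correct; this also covers the case $A \not\leq B$ reached through the recursion.

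I expect the main obstacle to be bookkeeping the representation rather than any deep mathematical difficulty: the algorithm carries de\-com\-pos\-itions as lists of pairs $(b_j, A_j)$ and partitions in the multiplicity encoding $\{(a_i, m_i)\}$, so one must be careful that the set operations $A \setminus A_h$, $\cup$, and the membership tests behave as the abstract partition operations defined in Section~\ref{sec:pb} — in particular that $\textsc{SumCompAux}$'s output partitions are genuine subpartitions so that $A \setminus A_h$ is well-defined, and that the union $f \cup \{(b_1, A_h)\}$ at line~14 produces a list of exactly $m$ pairs with the correct sums. A secondary subtlety worth a sentence is that each $AB$-de\-com\-pos\-ition is generated \emph{exactly once} (no duplicates), which follows because distinct choices of $A_h \in \mathcal{A}$ give first-components that differ, and within each branch the recursion produces distinct tails by induction; but since the statement only asks for generating \emph{all} such de\-com\-pos\-itions, strictly speaking uniqueness is not required for correctness and can be relegated to a remark.
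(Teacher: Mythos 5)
Your proposal is correct and follows essentially the same route as the paper's own proof: induction on $\card{B}$, using Theorem~\ref{teoSumAux} to guarantee that $\textsc{SumCompAux}$ returns exactly the subpartitions of $A$ summing to $b_1$, with the preliminary guards justified as necessary conditions and soundness/completeness of $Result$ argued via the two inclusions (the paper phrases completeness as a proof by contradiction, but the content is identical). Your explicit verification that $\sigma(A\setminus A_h)=\sigma(B')$, so the recursive call's line-1 guard does not spuriously fire, is a detail the paper glosses over and is worth keeping.
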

\begin{proof}
Each result $\mathcal S_i$ of this algorithm has the the form $\mathcal S_i=\{(b_1,A_{i,1}),\ldots,$ $(b_m,A_{i,m})\}$ and 
 is  an $AB$-de\-com\-pos\-ition. To prove this claim  we have to show that:
\begin{enumerate}[nosep]
 \item  $S_i$ is an $AB$-de\-com\-pos\-ition when $A \leq B$, the empty set otherwise.
 \item $Result$ contains only $AB$-de\-com\-pos\-itions.
 \item $Result$ contains all $AB$-de\-com\-pos\-itions.
\end{enumerate}
The algorithm termination is guaranteed by the verification of these three items.
\begin{description}[nosep,leftmargin=0.4cm,align=left,labelwidth=1cm]
 \item[\small Item 1.] The algorithm first checks if the sufficient conditions of non existence of sum composition
  (lines 1 to 5) are satisfied.
  When one of these conditions is true, the empty set is returned and the algorithm terminates.
  Otherwise, the proof of item 1 is conducted by induction on $m=\card{B}$
  \begin{itemize}[nosep]
   \item If $m=1$, then the solution is exactly the partition $A$.
    In fact, since $m=1$ and the condition of line $1$ is false,
    we have that $\sum_{i=1}^{k} m_i  a_i = \sum_{j=1}^{m}b_j = b_1$.
    $Result$ is initialized with the empty set (line 6)
    and, since the condition of the {\em if} at line 8 is not satisfied,
    then the instruction at line 18 is executed
    and assigns to $Result$ the single solution $\{(b_1,A)\}$.
  \item We assume the property true for $\card{B}=m-1$ and we prove it for $\card{B}=m$.
   $Result$ is initialized with empty set.
   The {\em if} at line 7 is satisfied,
   then the invocation of the $\textsc{SumCompAux}$ function
   returns the set ${\mathcal A}$ of all the subpartitions of $A$ whose sum is $b_1$,
   if they exist (see Theorem~\ref{teoSumAux}).
   If ${\mathcal A}$ is the empty set (condition at line 9),
   it means that no combination of elements of $A$ can give as sum $b_1$
   and then $B$ is not sum composition of $A$ and the empty set is returned.
   Instead, if ${\mathcal A \neq \emptyset}$,
   then the function spans on all the $A_h$ that are the outputs of $\textsc{SumCompAux}$
   (from line 11 to line 16).
   This is to guarantee that each combination of possible solutions is taken into consideration.
   For each $A_h$, the algorithm recursively calculates all the enabling decompositions
   for $A \setminus A_h$ and $B \setminus [b_1]$.
   The two inputs are not empty.
   In fact, $B \setminus [b_1]$ has at least the element $b_2$ (being $m>1$).
   Moreover, $A \setminus A_h \ne \emptyset$,
   being $ \sigma(A_h) = b_1 < \sigma(B) = \sigma(A) $,
   i.e. $ \sigma(A_h) < \sigma(A) $.
   So, by the inductive hypothesis, the set ${\mathcal F}$ contains
   all $A \setminus A_h, B'$-de\-com\-pos\-itions.
   The $Result$ is obtained (line 14) by adding, for each solution $f \in {\mathcal F}$,
   the pair $(b_1,A_h)$.
   The obtained result is thus an $AB$-de\-com\-pos\-ition.
  \end{itemize}
 \item[\small Item 2.]
  Consider a generic solution in $Result$ and we show that it is valid.
  The generic solution $s \in Result$, by the induction hypothesis,
  has the form $ s=f \cup \{(b_1,A_1)\} $.
  However, since $f$ is one of the output of Algorithm~\ref{alg1} ($A \setminus A_1, B \setminus [b_1]$),
  it can be written as $ f =\{(b_2,A_2),\ldots,(b_m,A_m)\} $.
  Then $ S=\{(b_1,A_1)\} \cup \{(b_2,A_2),\ldots,(b_m,A_m)\} $
  with the property $ \sigma(A_1) = b_1 $ and
  $ \sigma(A_i) = b_i $ for every $ i = 2, \ldots, m $.
  So $ A \leq B $.
 \item[\small Item 3.]
  By absurd, suppose that a solution $ S=\{(b_1,A_1),\ldots,(b_m,A_m)\} $ is not in $Result$.
  Because at line 8 we have in ${\mathcal A}$ all the pairs ($b_1, A'_h$)
  where $\sigma(A'_h) = b_1 $ (see Theorem~\ref{teoSumAux}),
  then the pair ($b_1, A_1$) was provided by the function $\textsc{SumCompAux}$.
  Then, the recursive call of Algorithm~\ref{alg1} with the parameters $A \setminus A'_1$ and $B \setminus [b_1]$,
  by inductive hypothesis, returns all $(A \setminus A'_1,B \setminus [b_1])$-de\-com\-pos\-itions.
  So, it should also contain $\{(b_2,A_2), \ldots, (b_m, A_m)\}$.
  The union of these two sets is thus an $AB$-de\-com\-pos\-ition,
  in contradiction with the initial hypothesis.
\end{description}
 This concludes the proof of the theorem.
\end{proof}

\section{The  Existential Sum Composition Algorithm}\label{sec:existence}

In the previous section, Algorithm~\ref{alg1}
was developed for generating all $AB$-de\-com\-pos\-itions
of two given integer partitions $A$ and $B$.
However, for several problems (as those mentioned in the introduction)
we are not interested on all of them,
but only on the fact that $B$ is sum composition of $A$,
that is, if there exists at least one $AB$-de\-com\-pos\-ition.
In this section, we develop the Algorithm $\textsc{SumCompExist}$ to provide an answer to such a problem.
This algorithm relies on the function $\textsc{SumCompExistAux}$,
which checks the sufficient conditions for having $A \leq B$.
This function, however, is invoked
only after applying some \emph{simplifications},
according to the theorems of Section~\ref{subsection:existence}
for removing elements from $A$ and $B$ while preserving the relation of sum composition.

Algorithm $\textsc{SumCompExist}$ checks in cascade several properties
by applying items 1, 3, and 5 of Lemma~\ref{lemma:1}.
Then, it applies the first \emph{simplification}
by eliminating the equal values from the partition $A$ and $B$ (Theorem~\ref{cor1}).
Then, on the returned sets (if not empty)
it applies Theorem~\ref{theo:3} and then Theorem~\ref{theo:simpl2}.
Finally, it performs the check related to Theorem~\ref{theo:aGb}.
After these checks and \emph{simplifications}, the function $\textsc{SumCompExistAux}$ is invoked.

The function $\textsc{SumCompExistAux}$ starts by checking the condition of Item 5 of Lemma~\ref{lemma:1} (line 12).
Whenever the condition is verified, the $false$ value is returned
because no $AB$-de\-com\-pos\-itions can be obtained.
Otherwise, the length of $B$ is checked.
If $\card{B}=1$ (line 14), then we are sure that an $AB$-de\-com\-pos\-ition has been identified.
Otherwise, when $\card{B}>1$, the function (line 14)
determines the partitions $[A_1,\ldots,A_h]$ for $b_1$ by invoking
the function $\textsc{SumCompAux}$ (described in previous section).
Whenever $h \not = 0$, we have $\sigma(A_i)=b_1$ for every $ i = 1, \ldots, h $,
and, for each $A_i$, $\textsc{SumCompExistAux}$ is recursively invoked
on $A \setminus A_i$ and $B \setminus [b_1]$
for checking the relation of sum composition between these two partitions (line 17).
When one of these recursive function returns $true$,
it means that $A \leq B$ and then the $true$ value is returned (line 18).
Whenever none of them returns $true$,
it means that no $AB$-de\-com\-pos\-itions exist and the $false$ value is returned (line 20).

\begin{algorithm}[t]
\caption{$\textsc{SumCompExist}$}\label{alg2}
\begin{footnotesize}
\begin{algorithmic}[1]
	\Statex
	\Statex \textbf{Input:} Two partitions:  $A=\{(a_1,m_1),\ldots, (a_k,m_k)\}$, $B=[b_1,\ldots, b_m]$
	\Statex \textbf{Output:} $P=true$ (when the sum composition exists)
\Statex  \hspace*{1.3cm}	 $P=false$ otherwise ($B$ is not sum composition of $A$)
	
	\Statex
	 \IIf {$\sigma(A)   \not = \sigma(B)$}
 	   		 {\bf return} $false$
 	   \EndIIf \com{Item 5 of Lemma~\ref{lemma:1}}
	 \IIf {$m > \card{A}$}	 
 	   		 {\bf return} $false$
 	   \EndIIf \com{Item 1 of Lemma~\ref{lemma:1}}
 	   \IIf {$a_1 > b_1$ {\bf or} $a_k > b_m$}
 	   		 {\bf return} $false$
 	   \EndIIf \com{Item 3 of Lemma~\ref{lemma:1}}
\State $(A, B) \leftarrow (A \setminus C, B \setminus C)$ \com {Where, $C= A \cap B$ - partition \emph{simplification} - Corollary~\ref{cor1}}
	 \IIf {$A = \emptyset$}
		{\bf return} $true$
	\EndIIf \com{partitions A and B are equal}
	 \IIf {$\card{A} \leq 2  \card{B} -1$}
 	   		 {\bf return} $false$
 	 \EndIIf \com{Theorem~\ref{theo:3}}
	\IIf {$a_{k}\!>\!b_{m-1}$}
		{$(A, B) \!\leftarrow\! (A \setminus [a_{k}], B \setminus [b_{m}]\cup [b_{m}-a_{k}])$}
	\EndIIf \com {Theorem~\ref{theo:simpl2}}
	\IIf {$\!\exists h\!\in\!A\!\cup\!B\!:\!\sum_{(a_i,m_i) \in A \land a_i \leq h}m_i\! \!a_i\!<\! \sum_{b_j\!\in\!B \land b_j \leq h}b_j$}
		{\bf return} $false$
	\EndIIf \com{Theorem~\ref{theo:aGb}}

	\State $Result \leftarrow \textsc{SumCompExistAux}(A, B)$
	\State \Return $Result$

\Statex \Function {SumCompExistAux}{$A, B$}
	 \IIf {$\sigma(A)   \not = \sigma(B)$}
		{\bf return} $false$
	\EndIIf
		\IIf {$\card{A}=1$}  {\bf return} $true$
	\EndIIf
	\State $\{A_1,\ldots,A_h\}\leftarrow \textsc{SumCompAux}(b_1, A, 1)$
	 \State $B'=B \setminus [b_1]$
           \For {{\bf each} $A_i \in \{A_1,\ldots,A_h\}$}
		\State $Result \leftarrow \textsc{SumCompExistAux}(A \setminus A_i, B')$
		\IIf {$Result = true$}
			{\bf return} $true$
		\EndIIf
	\EndFor
\State {\bf return} $false$
\EndFunction
\end{algorithmic}
\end{footnotesize}
\end{algorithm}

\begin{example}
Let 
 $A=\{(50,1),(100,2),(200,1),(250,1),(300,1)\}$ and $B = [300, 300, 400]$ be the partitions of Example~\ref{ex:es2}.
 Since $A \cap B = [300]$, by applying the code at line 4 of the algorithm,
 we obtain $A=\{(50,1),(100,2),(200,1),(250,1)\}$ and $B = [300, 400]$.
 By considering the first $300$ in $B$,
 the subpartitions of $A$ whose sum is $300$ are only $\{(50,1),(250,1)\}$ and $\{(100,1),(200,1)\}$.
 Since the invocation of the recursive function on $A=\{(100,2),(200,1)\}$ and $B = [400]$ returns true,
 the algorithm returns the existence of a de\-com\-pos\-ition between the two subpartitions.
 Therefore, the process in this case is  faster than the one described in Example~\ref{ex:sumcomp}. \ebox
\end{example}

The following theorem provides the correctness of the algorithm.
\begin{theorem}\label{theo:Existence}
 The application of Algorithm $\textsc{SumCompExist}$ to the partitions $A$ and $B$ always terminates
 and it returns $true$ whenever $B$ is sum composition of $A$.
\end{theorem}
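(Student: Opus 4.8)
The plan is to prove the theorem in two parts, mirroring the structure of the algorithm: first establish the correctness of the auxiliary function \textsc{SumCompExistAux}, and then argue that the preprocessing steps (lines 1--11 of Algorithm~\ref{alg2}) are sound, i.e.\ they replace the pair $(A,B)$ by a pair $(A',B')$ with $A' \leq B'$ if and only if $A \leq B$ (or correctly short-circuit to a definite answer). For the auxiliary function, I would proceed by induction on $\card{B}=m$. The base case $m=1$ is immediate: after the check at line~12 we know $\sigma(A)=\sigma(B)=b_1$, so $[A]$ is an $AB$-decomposition, and returning $true$ is correct. For the inductive step, I would invoke Theorem~\ref{teoSumAux} to guarantee that $\textsc{SumCompAux}(b_1,A,1)$ returns exactly the set $\{A_1,\ldots,A_h\}$ of all subpartitions of $A$ summing to $b_1$. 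If $B$ is sum composition of $A$ via some decomposition $[C_1,\ldots,C_m]$, then $C_1$ is one of the $A_i$, and by Theorem~\ref{lemma:sumnm1} (or directly) $B'=B\setminus[b_1]$ is sum composition of $A\setminus A_i$; by the induction hypothesis the recursive call at line~17 returns $true$, so the loop returns $true$. Conversely, if some recursive call returns $true$, then by induction $B'$ is sum composition of $A\setminus A_i$, and appending $A_i$ gives an $AB$-decomposition, so returning $true$ is correct; and if every recursive call returns $false$, then by induction no $A\setminus A_i$ admits $B'$ as sum composition, hence (since $C_1$ would have to be one of the $A_i$) no $AB$-decomposition exists, so returning $false$ is correct.

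Next I would handle the preprocessing. Lines~1--3 apply items~5,~1,~3 of Lemma~\ref{lemma:1}: each is a necessary condition for $A\leq B$, so returning $false$ when one fails is correct. Line~4 replaces $(A,B)$ by $(A\setminus C, B\setminus C)$ where $C=A\cap B$; by the Remark following Theorem~\ref{cor11} (equivalently, Theorems~\ref{cor1} and~\ref{cor11}), $A\leq B$ iff $A\setminus C\leq B\setminus C$, so this is sound, and line~5 correctly returns $true$ when the reduced $A$ is empty (then $B$ is empty too since $\sigma$ is preserved, and the empty decomposition works). Line~6 applies Theorem~\ref{theo:3} to the reduced partitions, which now share no common part, so $k=0$ and the bound reads $\card{A}\geq 2\card{B}$; failure of this is again a correct $false$. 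Line~7 applies Theorem~\ref{theo:simpl2}: after the previous simplification, if $a_k>b_{m-1}$ then (since $a_k\neq b_m$ as they share no common part, and $a_k\leq b_m$ by Lemma~\ref{lemma:1}) we have $b_{m-1}<a_k<b_m$, so the hypothesis of Theorem~\ref{theo:simpl2} holds and $A\leq B$ iff $A\setminus[a_k]\leq (B\setminus[b_m])\cup[b_m-a_k]$. Line~8 applies Theorem~\ref{theo:aGb}: failure of the stated inequality for some $h$ contradicts $A\leq B$, so returning $false$ is correct. In every branch, the pair passed to \textsc{SumCompExistAux} at line~10 satisfies ``$A\leq B$ iff the original input had $A\leq B$'', so by the first part the overall output is correct.

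Finally, termination. The preprocessing is a fixed, finite sequence of steps. In \textsc{SumCompExistAux}, each recursive call strictly decreases $\card{B}$: $B'=B\setminus[b_1]$ has one fewer part, and $m\geq 1$ always (the base case $\card{B}=1$ stops the recursion, and we never call it with an empty $B$ since $A\setminus A_i$ is nonempty when $\sigma(A_i)=b_1<\sigma(A)$, forcing $B'$ nonempty). The finiteness of the loop at line~16 follows from Theorem~\ref{teoSumAux}, which also guarantees termination of \textsc{SumCompAux}. Hence the algorithm terminates. The main obstacle I anticipate is the careful bookkeeping in the preprocessing argument: one must verify that the hypotheses of Theorems~\ref{theo:3},~\ref{theo:simpl2}, and~\ref{theo:aGb} genuinely hold for the \emph{already-modified} partitions at the point where each is applied (in particular, that after line~4 the partitions share no common part, which is exactly what makes the $a_k>b_{m-1}$ test equivalent to the strict double inequality $b_{m-1}<a_k<b_m$ required by Theorem~\ref{theo:simpl2}), and that the one-directional checks (which only return $false$) never wrongly reject a pair with $A\leq B$. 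The recursive core is routine given Theorems~\ref{teoSumAux} and~\ref{lemma:sumnm1}.
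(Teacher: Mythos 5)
Your proposal is correct and follows essentially the same route as the paper: an induction on $\card{B}$ for \textsc{SumCompExistAux} resting on Theorem~\ref{teoSumAux} and Theorem~\ref{lemma:sumnm1}, preceded by a soundness argument for the preprocessing reductions of lines 1--8. If anything, you are more explicit than the paper about why the hypotheses of Theorems~\ref{theo:3}, \ref{theo:simpl2} and \ref{theo:aGb} hold for the already-simplified partitions (the paper's proof passes over this point), so no gap needs to be flagged.
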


\begin{proof}
 The function $\textsc{SumCompExist}$ can be divided in two parts:
 the first one (from line 1 to line 9) verifies the sufficient conditions
 for which $B$ is not sum composition of $A$.
 The only exceptions are lines 4 and 7 in which the two partitions are simplified.
 Whenever, after removing from $A$ and $B$ their intersection,
 the obtained partition $A$ is empty,
 it means that $A=B$ and trivially $B$ is sum composition of $A$ (line 5).

 The second part of $\textsc{SumCompExist}$ is simply
 returning the result from $\textsc{SumCompExistAux}$
 that can be proved by induction on the length of $B$.
 \begin{description}[nosep,leftmargin=0.4cm,align=left,labelwidth=1cm]
  \item[\small Case 1.] If $\card{B}=1$, then we have two possibilities.
   If the condition at line 12 is verified,
   then the function returns $false$ as it has to be by Item 5 of Lemma~\ref{lemma:1}.
   On the contrary, the condition on line 13 is verified
   (because $b_1=\sigma(A)$) and $true$ is returned.
   In both cases the algorithm terminates the execution.
  \item[\small Case 2.] If $\card{B}>1$, we assume true for a $m=\card{B}-1$,
   and we  prove for $m+1$.
   Also in this case, if the condition at line 12 is verified,
   then the function returns $false$ and terminates (according to  Item 5 of Lemma~\ref{lemma:1}).
   The condition at line 13 is false and thus function $\textsc{SumCompAux}$ is invoked on $b_1$, $A$, and $1$.
   It returns the subpartitions $\{A_1,\ldots,A_h\}$ of $A$ s.t.
   $b_1=\sigma(A_j)$ for every $ j = 1,\ldots,h $ (Theorem~\ref{teoSumAux}).
   If this set is empty, it means that $b_1$ cannot be associated with any element of $A$
   and thus $B$ is not sum composition of $A$.
   Therefore, the function jumps the {\em for} loop (lines 16--19),
   returns $false$ and terminates.
   On the contrary, if this set is not empty,
   then the {\em for} loop (lines 16--19) is entered
   and each $A_i \in \{A_1,\ldots,A_h\}$ is processed
   by recursively invoking $\textsc{SumCompExistAux}$ on $A \setminus A_i$ and $B \setminus [b_1]$.
   By inductive hypothesis, this function returns the existence of a sum composition between its parameters.
   When one of the returned value is $true$, the function also returns $true$.
   Indeed, $\sigma(A_i) = b_1$ and $B \setminus [b_1]$ is sum composition of $A \setminus A_i$
   then, by Lemma~\ref{lemma:sumnm1}, $B$ is sum composition of $A$.
   Whenever none of the recursive invocations of $\textsc{SumCompExistAux}$ returns $true$,
   it means that $B$ cannot be sum composition of $A$,
   and the $false$ value is returned and the algorithm terminates.
 \end{description}
 
 \vspace*{-.5cm}
\end{proof}

\section{Experimental Results}\label{sec:exp}

The algorithms described in the previous sections have been experimentally evaluated by considering synthetic data\-sets with specific characteristics for  comparing their behaviors in the cases in which we know the existence of sum composition and the cases in which no sum composition can be determined.
In the generation of the partition $A$, we have considered the possibility to randomly generate $n$ samples in specific ranges of values with or without repetitions. In the following we denote with $A^{n}_{[min,max]}$ the partition $A$ whose length is $n$ and $\forall a \in A, min \leq a \leq max$.
For the generation of a  partition $B$ with $m$ elements ($\!2\! \leq\! m\! <\! n$) we have considered  $\!A^{n}_{[min,max]}\!$ and applied one of the following rules.
\begin{enumerate}[nosep]
\item[R1] Randomly split the $A^{n}_{[min,max]}$ in $m$ non-empty partitions  and then determine the values of $B$ by summing up the integers in each partition.
\item[R2] Generate the sum of the values of $A$,
 randomly generate $m$ distinct ordered values as follows:
 $v_1,\ldots, v_{m-1}$ in the range $[1, \sigma(A)]$, and $v_m=\sigma(A)$.
 Assuming $v_0=0$, each $b_i$ is obtained as $b_i=v_i-v_{i-1}$.
\end{enumerate}
Using rule {\tt R1}, we are guaranteed of the existence of sum composition,
whereas by rule {\tt R2}, we have that $\sigma(A)=\sigma(B)$,
but the condition of sum composition is not always guaranteed.
Notice that in case of non-existence of a sum composition between the partitions $A$ and $B$,
the execution cost of the exhaustive and existential algorithms should be comparable
because the entire space of possible solutions has to be explored.
In this evaluation, we also wish to study the impact of the properties
presented in Section~\ref{sec:properties}
for reducing the number of recursive calls introduced in the algorithms.
All experiments have been conducted 100 times for each type of generation of $A$ and $B$ and the reported results  (both in terms of execution time and number of enabling de\-com\-pos\-itions) correspond to their average. The experiments have been executed on a desktop with medium capabilities (Intel Core i7-4790 CPU at 3.60 GHz, processor based on x64 bit, 8 GB RAM).

\subsection{Experiments with the $\textsc{SumComp}$  Algorithm}
In the evaluation of the performances of the $\textsc{SumComp}$ Algorithm we have considered different combinations of the parameters for the generation of $A^{n}_{[min,max]}$ and the rule $R1$ for the generation of $B$. Indeed, the worst case corresponds on the existence of the sum composition and the need of generating all possible $AB$-de\-com\-pos\-itions.

The first experiments have been devoted to the identification of the execution time and number of enabling de\-com\-pos\-itions by varying the length of $A$ and $B$.  Due to the explosion of different combinations, and the limitation of the machine used for the experiments, we were able to consider partitions of $A$  with $\card{A}\leq 23$. Indeed, with $\card{A}=24$, memory allocation issues were raised ($>4.8$ GB memory used).  Moreover, we have considered three ranges of values  ($[1,100]$, $[1,150]$, $[1,200]$) in which the values of $A$ have been randomly chosen.

Figure~\ref{fig:SumCompTime1-200} shows a 3-dimensional comparison of the execution times varying the length of $A^{n}_{[1,200]}$ (with $3 \leq n \leq 23$) and the length of $B$  (with $2 \leq \card{B} \leq 22$). The range $[1,200]$ has been chosen because it is the most representative among the considered ones, in fact the results in this range dominate the results in the other ones.
The graphic shows that the execution times tend to increase when  $\card{B}$ is much  smaller than $\card{A}$. This makes sense because a higher number of possible combinations of values in $A$ can have as sum the values in $B$. The execution times with these examples is affordable, but by increasing of a single unit the length of $A$, the memory is not sufficient for verifying all the different combinations and the results cannot be reported. 

\begin{figure}[t]
\centering
\subfigure[]{\label{fig:SumCompTime1-200}\includegraphics[scale=.262]{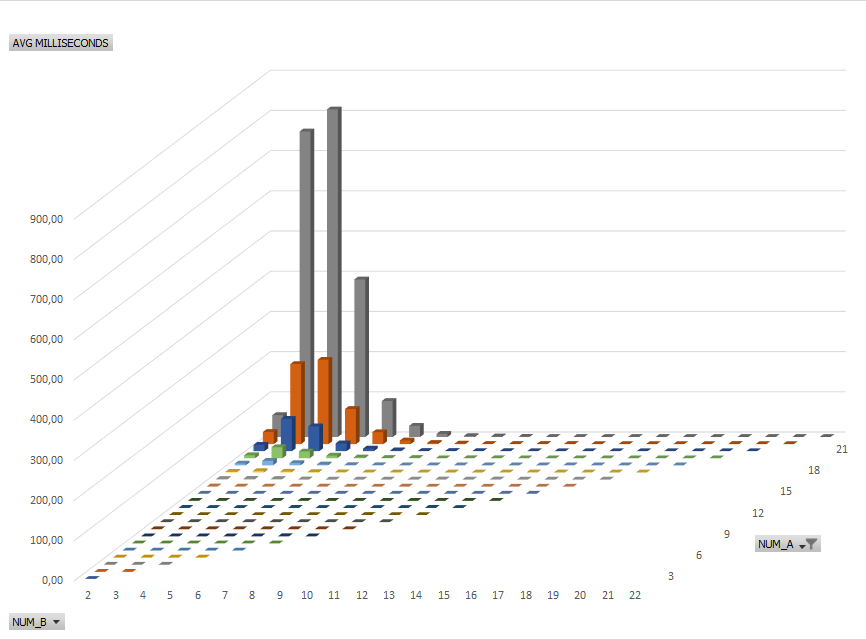}} \ \
\subfigure[]{\label{fig:SumCompNumberTimeB41-200}\includegraphics[scale=.262]{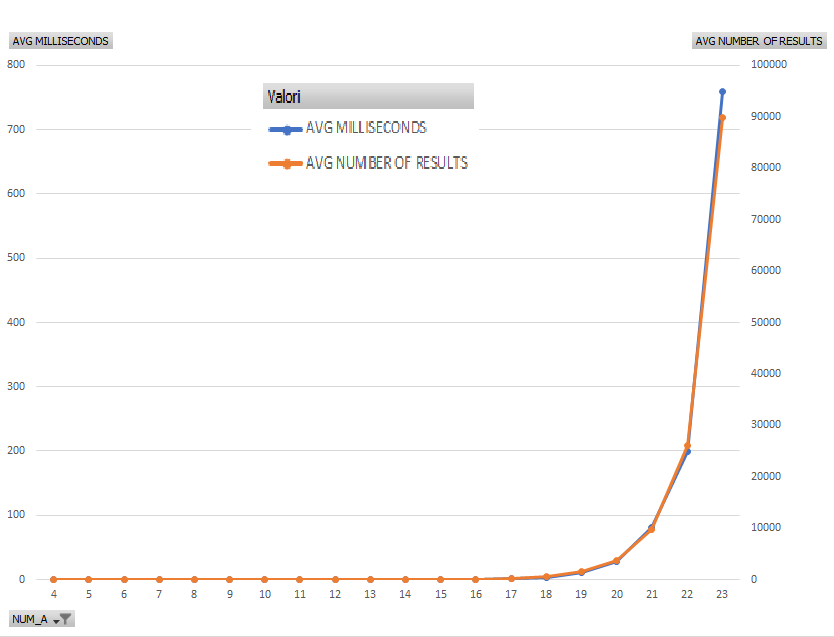}} \\
\subfigure[]{\label{fig:SumCompNumberTimeA231-200}\includegraphics[scale=.262]{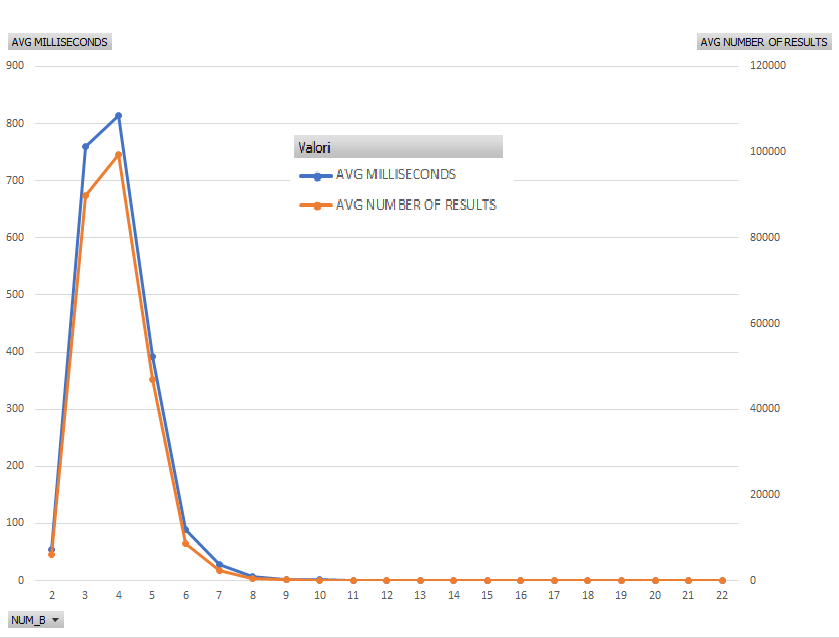}}\ \
\subfigure[]{\label{fig:SumCompNumberA23DistinctRange}\includegraphics[scale=.262]{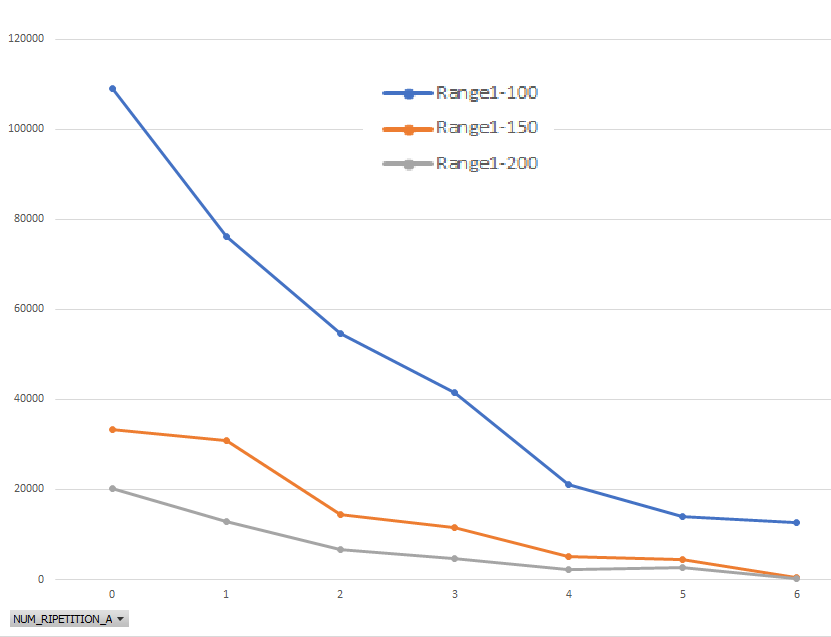}}
\caption{
Experiments with function $\textsc{sumComp}$ by applying rule $R1$.
  $(a)$ Execution times varying the length of $A^{n}_{1,200}$ and $B$  ($1 \leq \card{A}, \card{B} \leq 22$);
$(b)$ number of enabling de\-com\-pos\-itions and execution times varying the length of $A^{n}_{1,200}$ (with $1 \leq n \leq 22$) and fixing $\card{B}=4$;
$(c)$  number of enabling de\-com\-pos\-itions and execution times for $\card{A}=23$ and  varying the length of $B$  ($1 \leq \card{B} \leq 22$); and, $(d)$
execution times  with $\card{A}=23$ and $\card{B}=2$  varying the  number of repetitions in $A$ and the considered ranges ($[1,100]$, $[1,150]$, $[1,200]$).}\label{fig:SumCompTot}
\end{figure}

A slice of  Figure~\ref{fig:SumCompTime1-200}, by fixing  $\card{B}=4$, is reported in Figure~\ref{fig:SumCompNumberTimeB41-200} with also the number of identified enabling de\-com\-pos\-itions.  The graphic shows that the number of solutions (as well as the execution times) increase exponentially. Specifically, around $100.000$ enabling de\-com\-pos\-itions in average can be determined for $\card{A}=23$ in an elapsed time of 900 ms in average. Moreover, we can identify a correlation between the number of solutions and the execution times (correlation index $\approx 91\%$ on $\card{A}$).
Another slice of Figure~\ref{fig:SumCompTime1-200}, by fixing $\card{A}=23$, is reported in Figure~\ref{fig:SumCompNumberTimeA231-200}. In this case we can note that the maximum execution time (and also the number of enabling de\-com\-pos\-itions) is obtained for low values of $\card{B}$. Indeed, the few values contained in $B$ can be obtained by summing different combinations of values in $A$. When   $\card{B}$ increases the possible combinations deeply decrease.

If we now compare all the slices that can be obtained from Figure~\ref{fig:SumCompTime1-200} we can make the following observations. The execution time for the slice corresponding to $\card{A}=20$  is around $30$ ms while the one for the slice  $\card{A}=23$ moves to $900$ ms with a significant increase of time. Moreover, the highest execution time for the slice corresponding to $\card{A}=20$ is reached with $\card{B}=3$, whereas for the case $\card{A}=23$ the maximum execution time is reached with $\card{B}=4$.  Moreover, in the considered range of values we can state that the maximum number of enabling de\-com\-pos\-itions follows the law $5.5 \leq \frac{\card{A}}{\card{B}} \leq 7$. However, this observation holds only for the considered partitions $A$ and ranges. A deeper analysis is required for proving the general validity of this claim.  Table~\ref{tbl:enabling} reports the number of solutions (in average) by changing the range of values from which the integers in $A$ are chosen.  We can observe that the number of enabling de\-com\-pos\-itions is higher for values of $A$ chosen in the range $[1,100]$. This result is not intuitive and needs further investigations.

\begin{table}[b]
\begin{footnotesize}
\begin{center}
  \begin{tabular}{ r | r | r | r}
    \hline
    \# A & Range 1-100 & Range 1-150 & Range 1-200 \\ \hline
18 & 1,168 & 447 & 251 \\ \hline
19 & 4,086 & 1,520 & 667 \\ \hline
20 & 13,901 & 4,261 & 2,097 \\ \hline
21 & 38,352 & 18,815 & 7,810 \\ \hline
22 & 159,581 & 54,395 & 26,285 \\ \hline
23 & 477,449 & 198,690 & 99,552 \\    \hline
  \end{tabular}
\end{center}
\end{footnotesize}
\caption{Number of $AB$-de\-com\-pos\-itions by varying the length of $A$ and the range of values}\label{tbl:enabling}
\end{table}

Figure~\ref{fig:SumCompNumberA23DistinctRange}  shows the impact on the execution time of the presence of repeated elements in $A$. By increasing the number of  repetitions of values in $A$ (0 means no duplication, 1 one value duplicated and so on) we observe that the execution time  decreases for each range of values.
This means that our way to represent partitions  has positive impact on the performances.

\begin{figure}[t]
\centering
\subfigure[]{\label{fig:SumCompExistGlobal}\includegraphics[scale=.248]{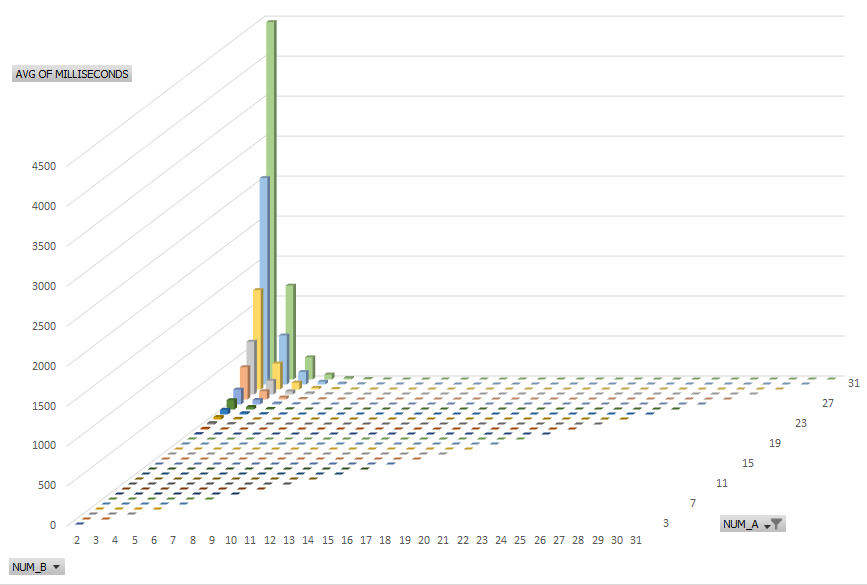}} \ \hspace*{6pt} \
\subfigure[]{\label{fig:SumCompExistGlobalSliceB2}\includegraphics[scale=.248]{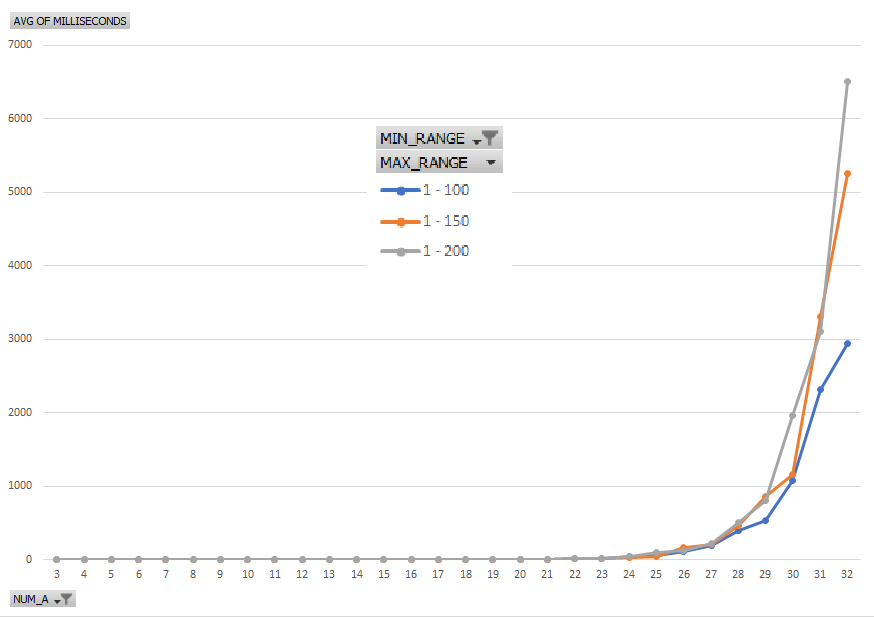}} \\
\subfigure[]{\label{fig:SumCompExistGlobalSliceA32}\includegraphics[scale=.248]{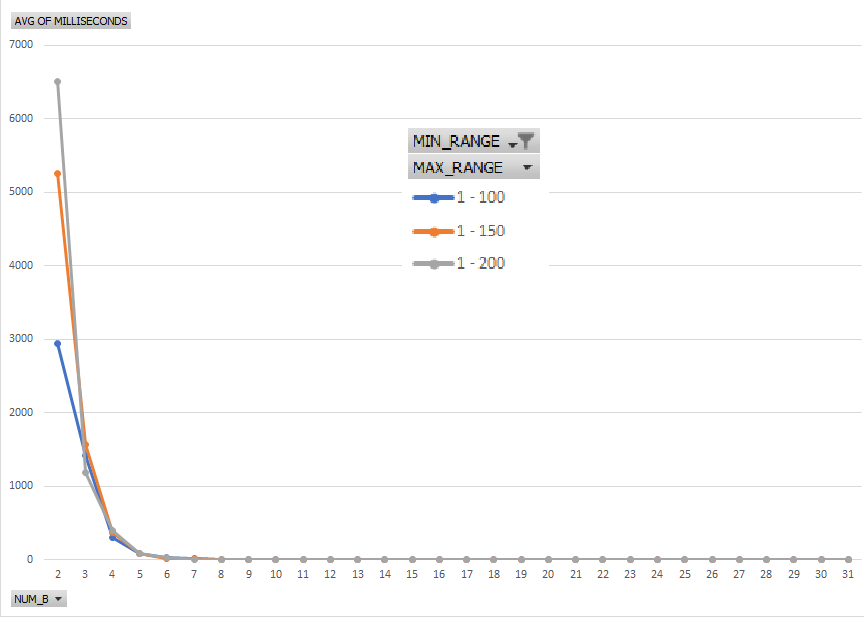}}\ \hspace*{6pt} \
\subfigure[]{\label{fig:SumCompExistA32DistinctRange}\includegraphics[scale=.248]{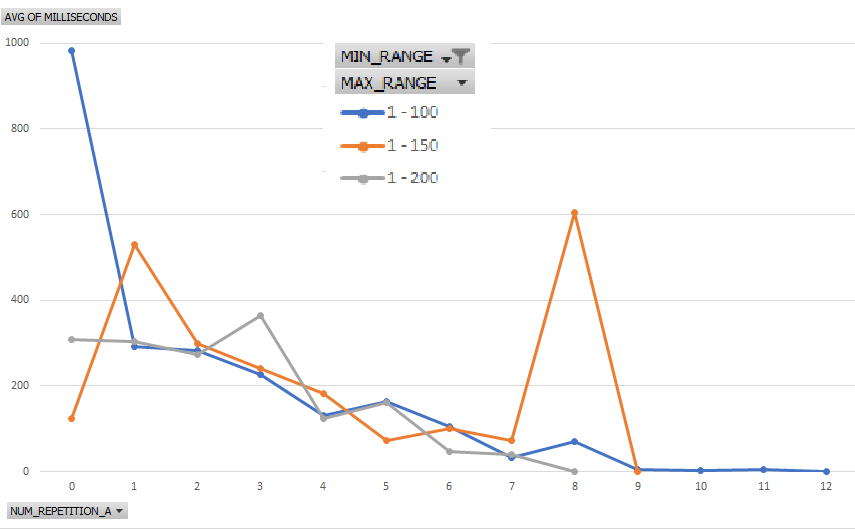}}
\caption{Experiments with function $\textsc{sumCompExist}$.  (a) Execution times varying the length of $A^{n}_{1,200}$ ($3 \leq n \leq 32$) and the length of $B$  ($2 \leq \card{B} \leq 31$);
(b) number of enabling de\-com\-pos\-itions and execution times varying the length of $A^{n}_{1,200}$ ($3 \leq n \leq 32$) and for $\card{B}=4$;
(c)  number of enabling de\-com\-pos\-itions and execution times for $\card{A}=32$ and  varying the length of $B$  ($2 \leq \card{B} \leq 31$);
execution times  with $\card{A}=32$ varying the  number of repetitions in $A$ and the considered ranges ($[1,100]$, $[1,150]$, $[1,200]$).}\label{fig:SumCompExist}
\end{figure}

\subsection{Experiments with the $\textsc{SumCompExist}$  Algorithm}
In the experiments with the existential algorithm we have considered the rule $R2$ for the generation of the partition $B$. Indeed, in this case we are interested in evaluating the behavior of the algorithm also when no solutions exist. Actually, the lack of an enabling de\-com\-pos\-ition corresponds to the worst case because the entire space of solutions should be explored before reaching to the conclusion. With this algorithm we have identified an issue of lack of memory too (when $\card{A}=33$). However, since this algorithm does not require to carry all the enabling de\-com\-pos\-itions, we were able to conduct experiments with a partition $A$ with maximal length of $32$.

%
%

\begin{figure}
  \centering
  \begin{minipage}[b]{0.45\textwidth}
    \includegraphics[scale=.3]{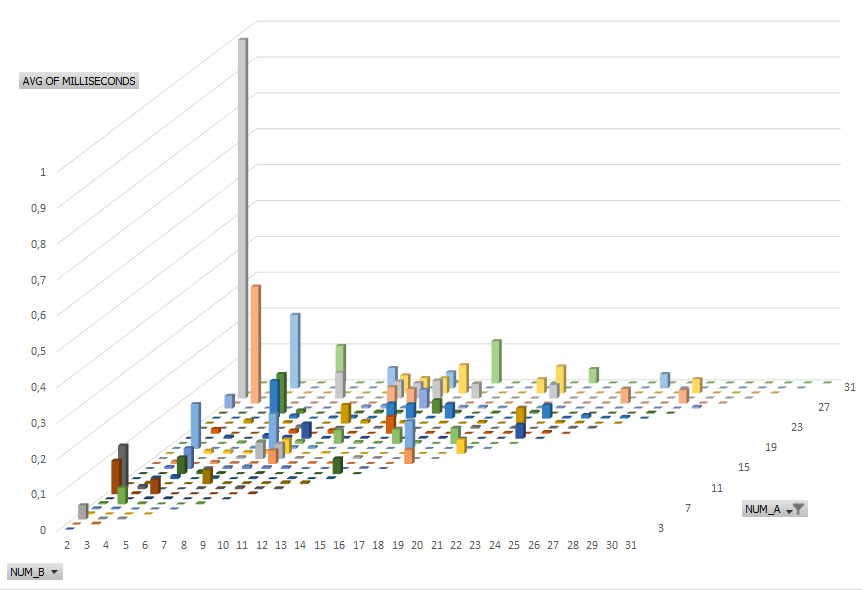}
\caption{Execution times of $\textsc{sumCompExist}$ when enabling de\-com\-pos\-itions do not exist -- varying the length of $A^{n}_{1,200}$  and the length of $B$  (with $3 \leq \card{A} \leq 32, 2 \leq \card{B} \leq 31$)} \label{fig:SumCompExistNotFound}
  \end{minipage}
  \ \hspace*{18pt}\
  \begin{minipage}[b]{0.45\textwidth}
   \includegraphics[scale=.33]{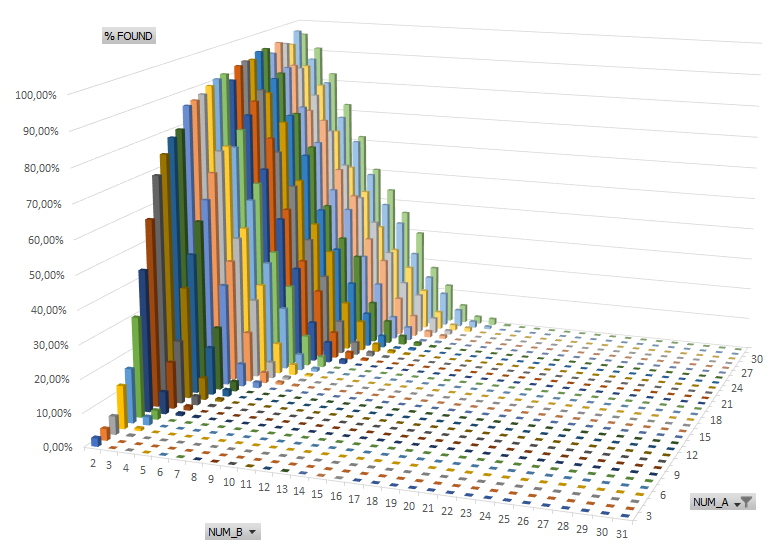}
\caption{Frequency of existence of sum composition w.r.t. the generated datasets by varying the  length of $A^{n}$ for all ranges and the length of $B$  (with $3 \leq \card{A} \leq 32, 2 \leq \card{B} \leq 31$)} \label{fig:SumCompExistBasePerFound}
  \end{minipage}
\end{figure}

Figure~\ref{fig:SumCompExistGlobal} shows a 3-dimensional comparison of the execution times varying the length of $A^{n}_{[1,200]}$ ($3 \leq n \leq 32$) and the length of $B$  ($2 \leq \card{B} \leq 31$). The skyline of the time execution of the existential algorithm is similar to the exhaustive version. However, the absolute times are considerably smaller. Indeed, the worse execution time for the exhaustive algorithm is around 900 ms for $\card{A}=23$, whereas for the existential algorithm for a partition $A$  with the same length is around $17$ ms. In case of existence, moreover, we are able to handle a partition $A$ with 32 elements. Thus, bigger than those considered in the exaustive case (the average execution time for $\card{A}=32$ and $\card{B}=2$ is around $4500$ ms).
Analogously to the exhaustive algorithm we have reported some slices of Figure~\ref{fig:SumCompExistGlobal}  in Figure~\ref{fig:SumCompExistGlobalSliceB2} (by setting $\card{B}=2$) and in Figure~\ref{fig:SumCompExistGlobalSliceA32} (by setting $\card{A}=32$) by considering different intervals where the values of $A$ are chosen.
The results are analogous with those obtained in the exhaustive algorithm, but we have to observe that the increase in time execution is obtained for higher values of $A$ ($\card{A}>30$) and that we obtain higher execution times  for the range $[1-200]$ and for the case with $\card{B}=2$.

Figure~\ref{fig:SumCompExistA32DistinctRange} shows the execution times at increasing number of duplicated values and by considering different ranges of values for $A$. Also in this case we can remark that our representation of partitions has positive effects in the performances of the algorithm.  This is more evident for the ranges $[1,100]$ and $[1,200]$, whereas in the range $[1,150]$ we have an outlier (average execution time $4,219$ ms in the case of number of repetitions of elements in $A$ equal to $8$) that has been confirmed by other experiments that we have conducted on the same range of values.


Figure~\ref{fig:SumCompExistNotFound} reports the average execution time of the algorithm applied to cases where an enabling de\-com\-pos\-ition  does not exists. These cases have been generated by rule $R2$ discussed at the beginning of the section and considering different lengths of $A$ and $B$ ($\card{A}$ from $2$ to $32$ elements). It is easy to see that on average the execution time is lesser than 1 ms. By comparing these results with those reported in Figure~\ref{fig:SumCompExistGlobal} (where the enabling de\-com\-pos\-itions always exist) we can observe that identifying the lack of sum composition is much faster than identifying its presence of several order of magnitude. This result is the opposite of what we were expecting. We performed a deeper analysis on these results and we noticed that, for the cases of non existence of sum composition, the \emph{simplification} mechanisms were very effective by significantly reducing  the length of the partitions $A$ and $B$ and, consequently,  reducing the execution time of the algorithm

\begin{figure}
  \centering
  \begin{minipage}[b]{0.45\textwidth}
\includegraphics[scale=.3]{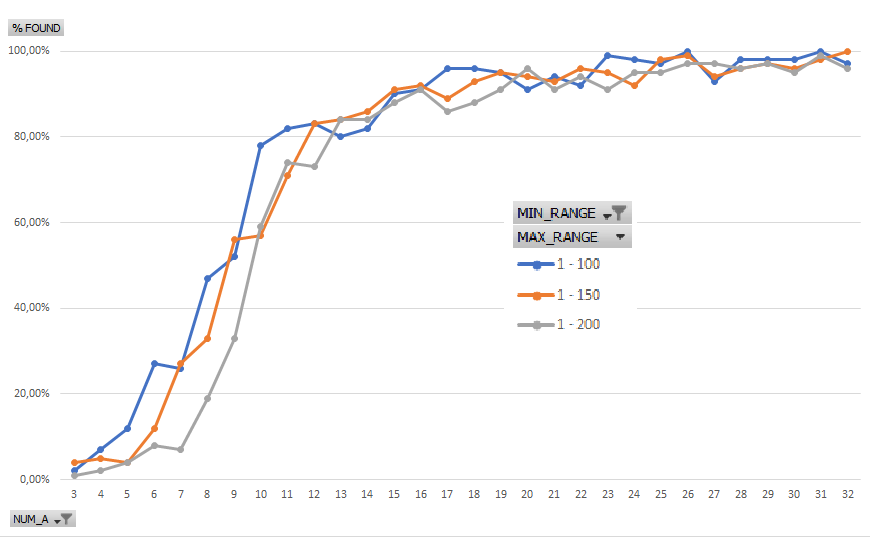}
\caption{Frequency of existence of sum composition w.r.t. the generated datasets by fixing $\card{B}=2$  and by varying the length of $A$ for all ranges  (with $3 \leq \card{A} \leq 32$)} \label{fig:SumCompExistBasePerFoundB2}   
  \end{minipage}
  \ \hspace*{20pt} \
  \begin{minipage}[b]{0.45\textwidth}
\includegraphics[scale=.3]{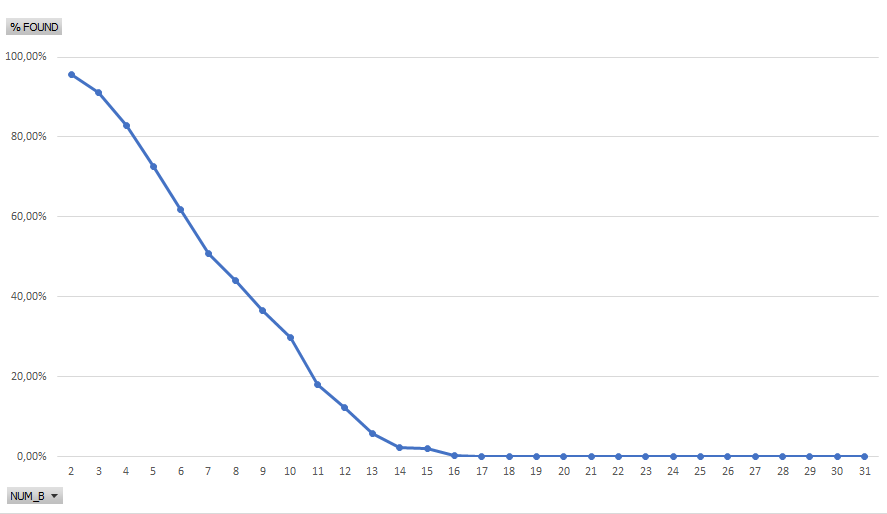}
\caption{Frequency of lacking of sum composition w.r.t. the generated datasets by fixing $\card{A}=32$ for all ranges and by varying the length of $B$  (with $2 \leq \card{B} \leq 31$)} \label{fig:SumCompExistBasePerFoundA32}
  \end{minipage}
\end{figure}

Figure~\ref{fig:SumCompExistBasePerFound} reports the frequency of the occurrence of enabling de\-com\-pos\-itions w.r.t. the number of cases that have been randomly generated in our datasets. 
The frequency is much higher when the length of $B$ is low (around $2$) and the length of $A$ is high.
The trend is better represented in Figure~\ref{fig:SumCompExistBasePerFoundB2} where a slice of the previous figure is shown by fixing $\card{B}=2$ and in Figure~\ref{fig:SumCompExistBasePerFoundA32} where a  slice of previous figure is shown by fixing $\card{A}=32$. From these graphics we can note that the frequency with which enabling de\-com\-pos\-itions are identified is $16,56\%$. Moreover, when the ratio $\card{A}/\card{B}$ is low, the frequency of cases in which enabling de\-com\-pos\-itions are not identified rises to $100\%$. The highest numbers of these cases is verified when $\card{A}/\card{B}=2$ and this result can be justified by  Theorem~\ref{theo:3}. For higher value of this ratio, it could be worth investigating  if the property holds also for higher numbers of $A$.


The last experiments were devoted to evaluate the impact of the \emph{simplification} properties introduced in Section~\ref{subsection:existence} on the performances of the existential algorithm.  For this purpose,  two kinds of \emph{simplifications} are considered:
\begin{itemize}[nosep]
\item A \emph{full simplification}: when the $\textsc{sumCompExist}$ algorithm can decide if the sum composition holds or not without the need to call the {\sc SumCompExistAux} function. This happens when one of the conditions of the statements 1, 2, 3, 5, 6 and 8 are met in Algorithm~\ref{alg2}.
\item A \emph{partial simplification}: when  $\textsc{sumCompExist}$ can reduce the execution time of the algorithm by decreasing the numbers of elements of $A$ and $B$ before calling the {\sc SumCompExistAux} function. This happens when the statements at line 4 or at line 7 is applied. 
\end{itemize}

\begin{figure}[t]
\centering
\subfigure[]{\label{fig:SumCompExistSimplTotal}\includegraphics[scale=.29]{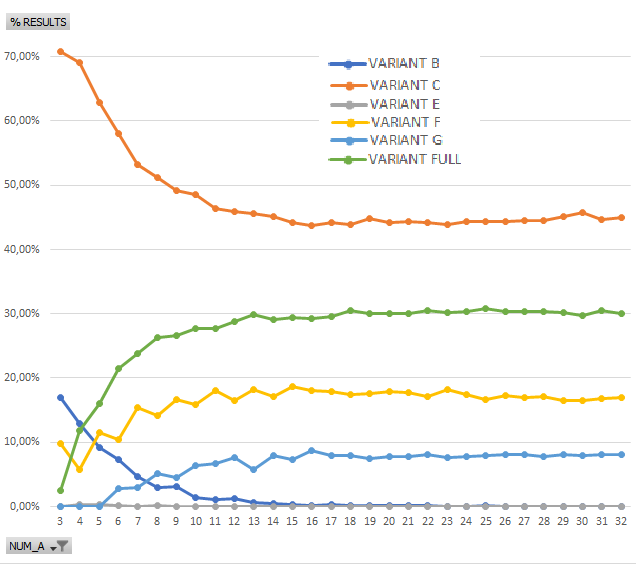}}  \ \hspace*{6pt} \
\subfigure[]{\label{fig:SumCompExistSimplTime}\includegraphics[scale=.29]{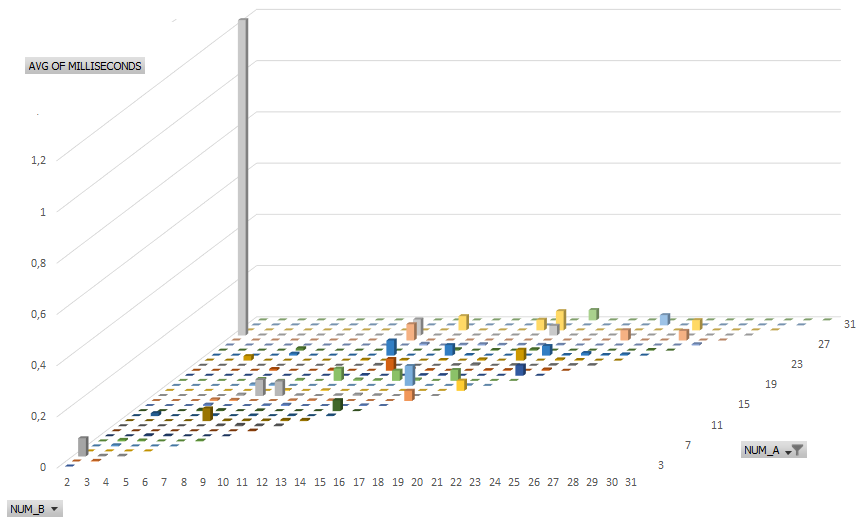}} \\
\subfigure[]{\label{fig:SumCompExistSimplRatio}\includegraphics[scale=.29]{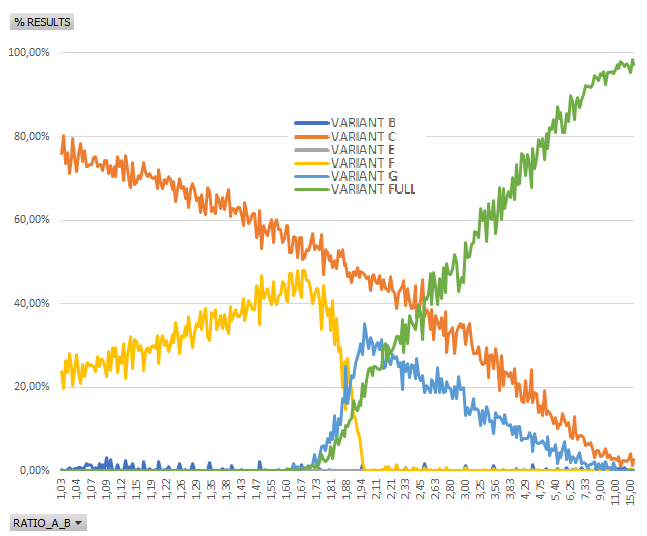}} \  \hspace*{6pt} \
\subfigure[]{\label{fig:SumCompExistSimplA}\includegraphics[scale=.29]{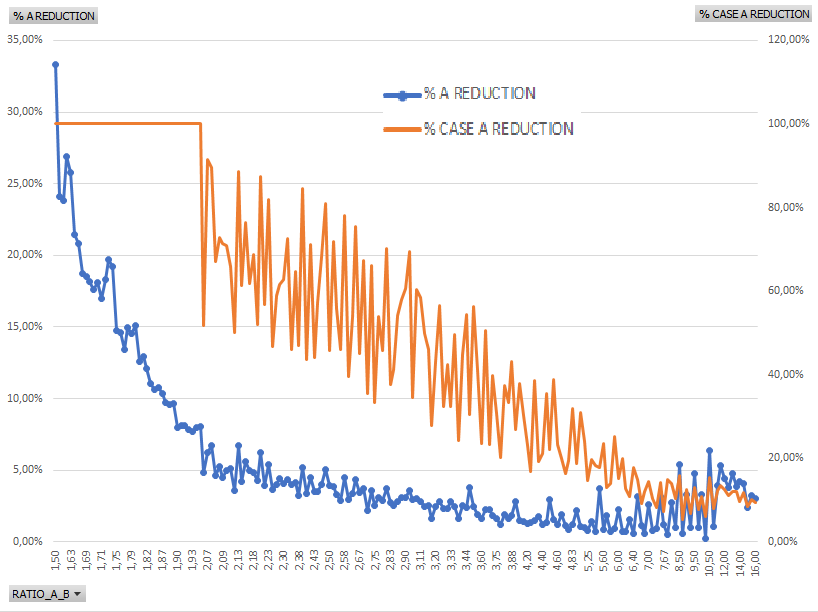}}
\caption{Effect of simplifications on function $\textsc{sumCompExist}$. (a) Percentage of \emph{full simplification} variants. (b) average execution time  for \emph{full simplified} variants. (c) Percentage of \emph{full simplification} variants in relation with the  ratio $\card{A}/\card{B}$. (d) Percentage of \emph{partial simplification} on $A$ in relation with the  ratio $\card{A}/\card{B}$}\label{fig:XXX}
\end{figure}

For the \emph{full simplification}, the following  situations of exit from Algorithm~\ref{alg2} are analyzed:
\begin{itemize}[nosep]
\item {\em variant B}. Exit at line  2. 
\item {\em variant C}. Exit at line  3. 
\item {\em variant E}. Exit at line  5. 
\item {\em variant F}. Exit at line  6. 
\item {\em variant G}. Exit at line  8. 
\end{itemize}

These variants are compared with the entire execution of the Algorithm~\ref{alg2} which is denoted {\em variant FULL}.
Note that the exit at line 1 from Algorithm~\ref{alg2} is not considered because this situation cannot ever be verified because of the use of rule $R2$ in the generation of the datasets.

Figure~\ref{fig:SumCompExistSimplTotal} reports the percentage of cases for the different variants to Algorithm~\ref{alg2}. We can see that, for the ranges of values observed, we have that the \emph{full simplification} can provide the existence or not existence of sum composition for roughly 70\% of the cases. This result is quite promising, because despite the high complexity of sum composition, it seems that for a significant percentage of cases, the algorithm can provide an answer quite quickly. 
The figure shows  that the maximum average time  is 1.23 ms ($\card{B}=2$ and $\card{A}=29$) that is 3 order of magnitude smaller than the case of not \emph{full simplification} (4,469 ms when $\card{B}=2$ and $\card{A}=32$ in Figure~\ref{fig:SumCompExistGlobal}). Another interesting behavior can be observed  in Figure~\ref{fig:SumCompExistSimplTotal}: when $\card{A} \geq 16$ the percentage of \emph{full simplification} becomes stable for all the cases. This fact deserves further  study  in the future.

Figure~\ref{fig:SumCompExistSimplTime} shows the execution times of the different variants w.r.t. the ratio of $\card{A}$ and $\card{B}$. We can observe that till the ratio is $3$, this \emph{simplification} is quite effective in at least 50\% of the cases. When this ratio increases, the efficacy of the simplification decreases rapidly. It would be interesting to verify with higher length of $A$ if these results are confirmed. Another  investigation direction  is to identify other kinds of \emph{simplification} that can be applied.

Figure~\ref{fig:SumCompExistSimplA} shows the effects of \emph{partial simplification} (only for the call of $\textsc{SumCompExistAux}$) of the partition $A$ by applying  the statements at line 4 and 7 in the $\textsc{sumCompExist}$ algorithm. In particular the
line ``{\small \% CASE A REDUCTION}"
 is the percentage of the cases where we can reduce the number of elements of $A$,
line ``{\small \% A REDUCTION}"
 is the ratio (number of $A$ eliminated)/$\card{A}$ (two different scales used). The x-axis is the ratio $\card{A}/\card{B}$ because it is the most reasonable due the variation of $A$ and $B$ on our tests. We can observe  that this kind of simplification is applied in all the cases when $\card{A}/\card{B} \le 2$ then
it rapidly decreases
arriving at an application of this \emph{simplification} at roughly 10\% of the cases when $\card{A}/\card{B} \ge 10$. The percentage of number of $A$ eliminated is even less with a decreasing curve until $\card{A}/\card{B} = 10$ and then a strange small increase until $\card{A}/\card{B} = 16$. This kind of \emph{simplification} can be considered interesting but marginal w.r.t. the most complex cases.

\section{Related Work}\label{sec:rw}
The sum composition problem, as defined in this article, is hard to find in the literature.
A similar class of problems, named  {\em Optimal Multi-Way Number Partitioning}, can be found in \cite{SEK18}. In their case, the authors provide different optimal  algorithms for separating a partition $A$ of $n$ positive integers into $k$ subsets such that the largest sum of the integers assigned to any subset is minimized.  The paper provides an overview of different algorithms that fall into three categories: sequential number partitioning (SNP); binary-search improved bin completion (BSIBC); and, cached iterative weakening (CIW). They show experimentally that for large random numbers, SNP and CIW outperform state of the art by up of seven orders of magnitude in terms of runtime.
The problem is slightly different from the one that we face in this paper in which we wish to partition the set $A$ according to the values contained in the set $B$. However, we also exploit a SNP approach for the generation of the possible partitions of the partition $A$ according to the integer numbers contained in $B$. Peculiarity of our approach is the representation of the partition $A$ and $B$ that allows one to reduce the cases to be explored by eliminating useless permutations. 

A much more similar formulation of the problem faced in this paper is the ``k-partition problem'' proposed in  \cite{Chaimovich1993AFP} where the authors wish to minimize a objective function $z$ with the partitions $A$ and $B$ (with the sum of $A$ equals the sum of $B$) where $A$ is partitioned in $m$ subpartitions $(A_1, \ldots, A_m)$. For each partition the  function $z$ is calculated as $z=max_{i=1}^{m}(\sum_{a \in A_i}a)/b_i$.
It is easy to see that if the minimum of objective function is $z=1$ then $B$ is sum composition of $A$ and, vice versa, if $B$ is sum composition of $A$, the objective function $z=1$. In \cite{Chaimovich1993AFP}, special restrictions are introduced concerning the length and distinct values of $A$, the maximal integer in $A$ and the minimal integer in $B$ in order to provide more efficient algorithms.
 For instance, if $\card{B}=2$  and $max(A)=100$, then  $\cardd{A}>94$ (i.e. almost all the value from 1 to 100) and $min(B)>2366$.
 Even if there are many similarities with the approach proposed in this paper, in \cite{Chaimovich1993AFP} no real implementation of the algorithms are proposed and the paper lacks of experimental results.
 Moreover, the paper provides a response to the existence of a solution for the $k$-partition problem but does not provide algorithms for the identification of all possible solutions as we propose in this paper. Finally, our paper provides a characterization of the properties of sum composition that are exploited for reducing the number of cases to be tested. Even if our approach is still NP-hard, its runtime is significantly reduced especially when checking the existence of a solution, and we do not provide any restrictions on the partitions $A$ and $B$.

In our algorithms  we need to go  through the identification of all the solutions of the well-known ``Subset Sum'' Problem that is one of Karp's original NP-complete problems \cite{Karp1972}. This problem consists in determining the existence of a subset $A'$ of $A$ whose sum is $s$ and is  a well-known example of a problem that can be solved in
weakly polynomial time. As a weakly NP-complete problem, there is a standard pseudopolynomial time algorithm using a dynamic
programming, due to Bellman, that solves it in ${\mathcal O}(\card{A} \sigma(A'))$ time \cite{Bell58}. Recently some better solutions have need proposed by Koiliaris and Xuy in \cite{Koiliaris2017AFP} with an algorithms that runs in $\tilde{{\mathcal O}}(min\{\sqrt{\cardd{A}} \sigma(A'), \sigma(A')^{\frac{4}{3}}, \sigma(A)\})$ time, where $\tilde{{\mathcal O}}$
hides polylogarithmic factors.
We remark that in our case, it is not enough to determine if a subset exists but we need to identify all the possible subsets in $A$ whose sum is $s$. Therefore, the standard approaches proposed in the literature cannnot be directly applied. In our case,
we keep the values of $A$ ordered and we  apply a SNP approach for enumerating all the possible subpartition of $A$ (starting from the lower values of $A$) that can lead to the sum $s$. A subpartition is skipped when including a new value of $A$, this leads to a value greater that $s$. Since our partitions are ordered, we can guarantee to identify a possible solution, whereas the approach proposed in \cite{Koiliaris2017AFP} randomly generates possible configurations to be proved.
Finally, relying on our representation of partitions, permutations are avoided and  the configurations to test are reduced.

\section{Conclusions}\label{sec:conclusion}

In this paper we introduced the sum composition problem between two partitions $A$ and $B$ of positive integers. Starting from a formal presentation of the problem by exploiting the poset theory, we have proposed several properties  that can be exploited for improving the execution time of the developed algorithms.
Then, we have developed an exhaustive algorithm
for the generation of all $AB$-de\-com\-pos\-itions for which $ A \leq B $,
and an  algorithm for checking the existence of the relation $ A \leq B $.
The correctness of these two algorithms is proved.

An experimental analysis is provided for assessing the quality of the proposed solutions.
As expected, the algorithms have an exponential growth with the length of $A$ and $B$.
We also show a correlation between the execution time and the number of enabling de\-com\-pos\-itions
in the execution of the exhaustive algorithm.
Moreover, we show that the number of repetitions of the elements in the partition $A$
impacts the execution time for both algorithms and how the adopted data structures reduce the number of configurations to be checked.  For the existential algorithm, we had some surprising good impact on ``full simplification'' actions on roughly $70\%$ of the cases and also ``partial simplification'' can have good impact in the range of $\card{A}/\card{B} \leq 3$. Other interesting experimental evidences were identified as the ratio of $\card{A}/\card{B}$ for the enabling de\-com\-pos\-itions (exhaustive algorithm). Anyway, these algorithms present a limitation of  ``lack of memory'' ($\card{A}=24$ for the exhaustive algorithm and $\card{A}=33$ for the existence algorithm).

New area of investigations can be identified in finding:
$i)$ new \emph{simplification} properties that can reduce the execution time of the algorithms;
$ii)$ new algorithms that have less limitation of memory usage;
$iii)$  a wider range of test cases where the validity of certain properties can be checked.

\small

\end{document}